\newif\ifpdf
\definecolor{grey}{gray}{0.7}
\newcommand{\eg}{e.\,g.\xspace}
\newcommand{\ie}{i.\,e.\xspace}
\newcommand{\bigO}{\mathcal{O}}
\newcommand{\etal}{\textit{et al.}\xspace}
\newcommand{\wrt}{w.\,r.\,t.\xspace}
\newcommand{\metis}{\textsc{METIS}\xspace}
\newcommand{\gpmetis}{\textsc{gpMetis}\xspace}
\newcommand{\ndmetis}{\textsc{ndMetis}\xspace}
\newcommand{\kahip}{\textsc{KaHIP}\xspace}
\newcommand{\initial}{\textsc{Initial}\xspace}
\newcommand{\initialM}{\textsc{Initial}_{\textsc{g}} \xspace}
\newcommand{\initialMO}{\textsc{Initial}_{\textsc{n}} \xspace}
\newcommand{\random}{\textsc{Random}\xspace}
\newcommand{\randomM}{\textsc{Random}_{\textsc{g}} \xspace}
\newcommand{\randomMO}{\textsc{Random}_{\textsc{n}} \xspace}
\newcommand{\rcm}{\textsc{RCM}\xspace}
\newcommand{\rcmM}{\textsc{RCM}_{\textsc{g}} \xspace}
\newcommand{\rcmMO}{\textsc{RCM}_{\textsc{n}} \xspace}
\newcommand{\durebi}{\textsc{DRB}\xspace}
\newcommand{\durebiM}{\textsc{DRB}_{\textsc{g}} \xspace}
\newcommand{\durebiMO}{\textsc{DRB}_{\textsc{n}} \xspace}
\newcommand{\greedyall}{\textsc{GreedyAll}\xspace}
\newcommand{\greedyallc}{\textsc{GreedyAllC}\xspace}
\newcommand{\greedyallcM}{\textsc{GreedyAllC}_{\textsc{g}} \xspace}
\newcommand{\greedyallcMO}{\textsc{greedyAllC}_{\textsc{n}} \xspace}
\newcommand{\greedymin}{\textsc{GreedyMin}\xspace}
\newcommand{\greedyminM}{\textsc{GreedyMin}_{\textsc{g}} \xspace}
\newcommand{\greedyminMO}{\textsc{GreedyMin}_{\textsc{n}} \xspace}
\newcommand{\greedyminc}{\textsc{GreedyMinC}\xspace}
\newcommand{\greedymincM}{\textsc{GreedyMinC}_{\textsc{g}} \xspace}
\newcommand{\greedymincMO}{\textsc{GreedyMinC}_{\textsc{n}} \xspace}
\newcommand{\walshawlarge}{\textsc{WalshawLarge}\xspace}
\newcommand{\complexnets}{\textsc{ComplexNets}\xspace}
\newcommand{\scotch}{\textsc{Scotch}\xspace}
\newcommand{\NN}{\mbox{\rm I$\!$N}}
\newcommand{\intmax}{\textsc{int\_max}\xspace}
\newcommand{\anoe}[1]{\textcolor{blue}{[AN: #1]}\xspace}
\newcommand{\hmey}[1]{\textcolor{red}{[HM: #1]}\xspace}
\newcommand{\rgla}[1]{\textcolor{grey}{[RG: #1]}\xspace}
\renewcommand{\anoe}[1]{}
\renewcommand{\hmey}[1]{}
\renewcommand{\rgla}[1]{}
\begin{document}
%\linenumbers
%\date{}

%
%%%%%%%%%%%%%%%%%%%%%%%%%%%%%%%%%%%%%
% Author macros
%%%%%%%%%%%%%%%%%%%%%%%%%%%%%%%%%%%%%
%
\title{Algorithms for Mapping Parallel Processes onto\\Grid and Torus Architectures}

\author{Roland Glantz \and Henning Meyerhenke \and Alexander Noe}

\institute{%Institute of Theoretical Informatics, 
Karlsruhe Institute of Technology (KIT), Karlsruhe, Germany}
\maketitle
%
%%%%%%%%%%%%%%%%%%%%%%%%%%%%%%%%%%%%%%%%%%%%%%
\vspace{-4mm}

%%%%%%%%%%%%%%%%%%%%%%%%%%%%%%%%%%%%%%%%%%%%%%
%
%
%%%%%%%%%%%%%%%%%%%%%%%%%%%%%%%%%%%%%%%%%%%%%%
\begin{abstract}
Static mapping is the assignment of parallel processes to the
processing elements (PEs) of a parallel system, where the assignment
does not change during the application's lifetime. In our scenario we
model an application's computations and their dependencies by an
application graph. This graph is first partitioned into (nearly)
equally sized blocks. These blocks need to communicate at block
boundaries.  To assign the processes to PEs, our goal is to compute a
communication-efficient bijective mapping between the blocks and the
PEs.

This approach of partitioning followed by bijective mapping has many
degrees of freedom. Thus, users and developers of parallel
applications need to know more about which choices work for which
application graphs and which parallel architectures. To this end, we
not only develop new mapping algorithms (derived from known greedy
methods).  We also perform extensive experiments involving different
classes of application graphs (meshes and complex networks),
architectures of parallel computers (grids and tori), as well as
different partitioners and mapping algorithms. Surprisingly, the
quality of the partitions, unless very poor, has little influence on
the quality of the mapping.

More importantly, one of our new mapping algorithms always yields the
best results in terms of the quality measure maximum congestion when
the application graphs are complex networks.  In case of meshes as
application graphs, this mapping algorithm always leads in terms of
maximum congestion \emph{and} maximum dilation, another common quality
measure.
\end{abstract}

%\noindent \textbf{Keywords:} Bijective topology mapping, graph embedding, parallel processes,
%graph partitioning \hmey{TODO!!!}

%%%%%%%%%%%%%%%%%%%%%%%%%%%%%%%%%%%%%%%%%%%%%%
%
%
%%%%%%%%%%%%%%%%%%%%%%%%%%%%%%%%%%%%%%%%%%%%%%
\section{Introduction}
\label{sec:intro}
Symmetric dependencies of computations within a parallel application can be modeled by an
undirected graph $G_a$, called \emph{application graph}, \eg the
mesh of a numerical simulation. Iterative algorithms in such a
simulation act upon the vertices of $G_a$, and for each such vertex
require the values of the neighboring vertices from the previous
iteration. Thus, a vertex of $G_a$ represents some computation, and an
edge of $G_a$ indicates a dependency between computations, \ie
an exchange of data. It is important to note that this modeling is not restricted 
to simulations at all. In fact, the nodes of $G_a$ could represent arbitrary parallel processes
and the edges symmetric communication requirements between the processes.

Typically, running an application on computers with distributed parallelism requires the application graph to be spread over 
the computer's processing elements. One
way to carry out this task, called \emph{static mapping}, is to (i)
partition the application graph $G_{a}$ into blocks of equal size (or of equal weight
in case the computational requirements at the nodes are not homogeneous) for
load balancing purposes and (ii) map the blocks of $G_{a}$ onto the
processing elements (PEs) of a parallel computer, see
Figure~\ref{fig:overview}. 
Mapping may involve the communication
graph $G_{c}$, whose vertices represent the blocks of $G_{a}$'s
partition and whose edges indicate block neighborhood and therefore
communication between different PEs. 

The parallel computer is often represented as a graph $G_p$, called \emph{processor
  graph} (or topology graph), the vertices of which represent the PEs,
and the edges of which represent physical communication links between
the PEs. We require that $G_c$ has the same number of vertices as
$G_p$ and make the assumption that $G_p$ is sparse,
which is true for many real architectures today~\cite{top500-13-06}.
In this paper we address the problem of finding a bijective mapping
$\Pi$ of $G_c$'s vertex set onto $G_p$'s vertex set
(processors) that is communication-efficient. We refer to $\Pi$ as
\emph{bijective topology mapping} or simply \emph{mapping}. One can also
see the problem as embedding the guest graph $G_c$ into the host graph $G_p$.

\begin{figure}[tb]
\centering{}(a) \includegraphics[width=2.3cm]{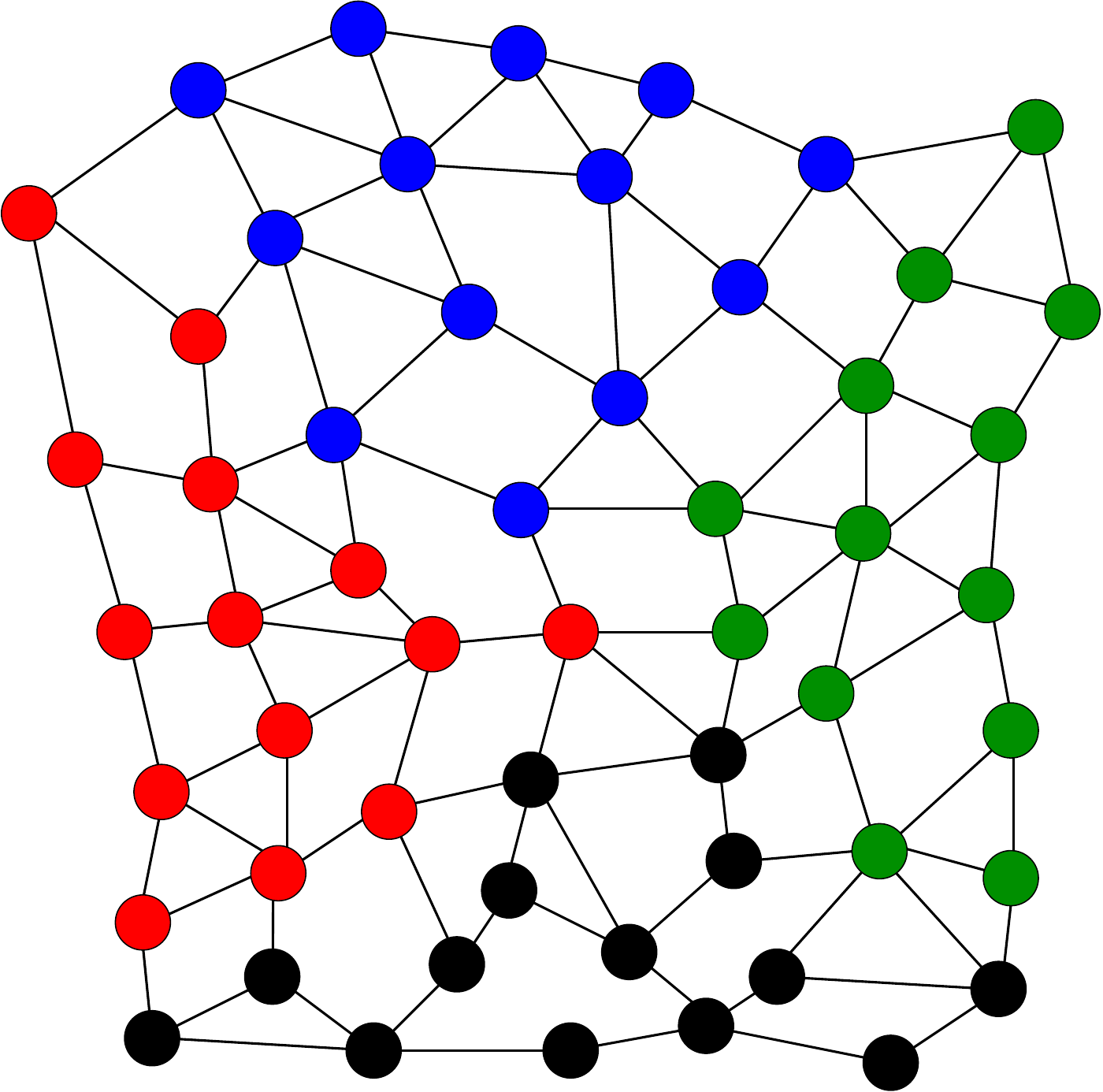} (b)
\includegraphics[width=2.3cm]{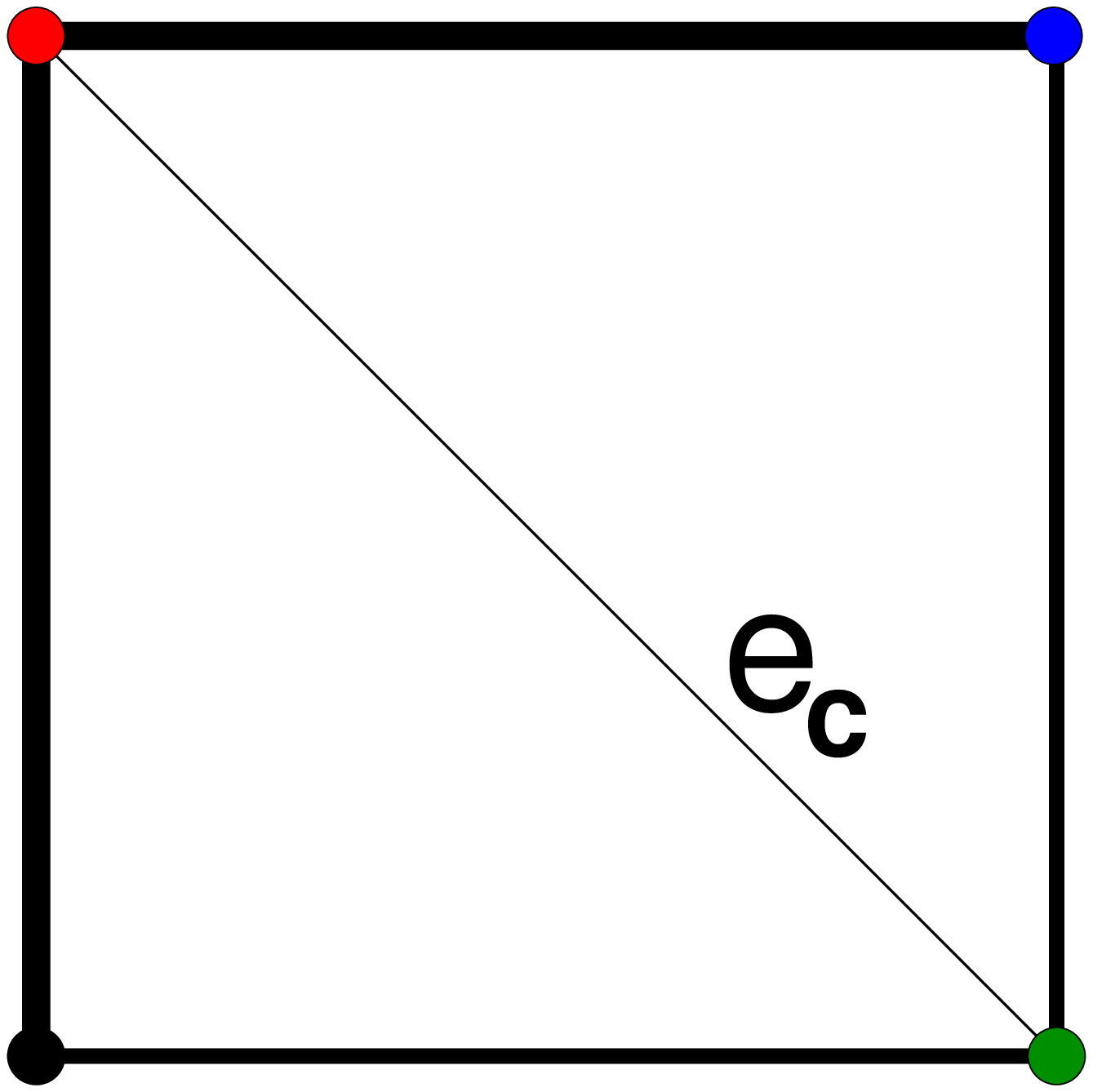} (c)
\includegraphics[width=2.3cm]{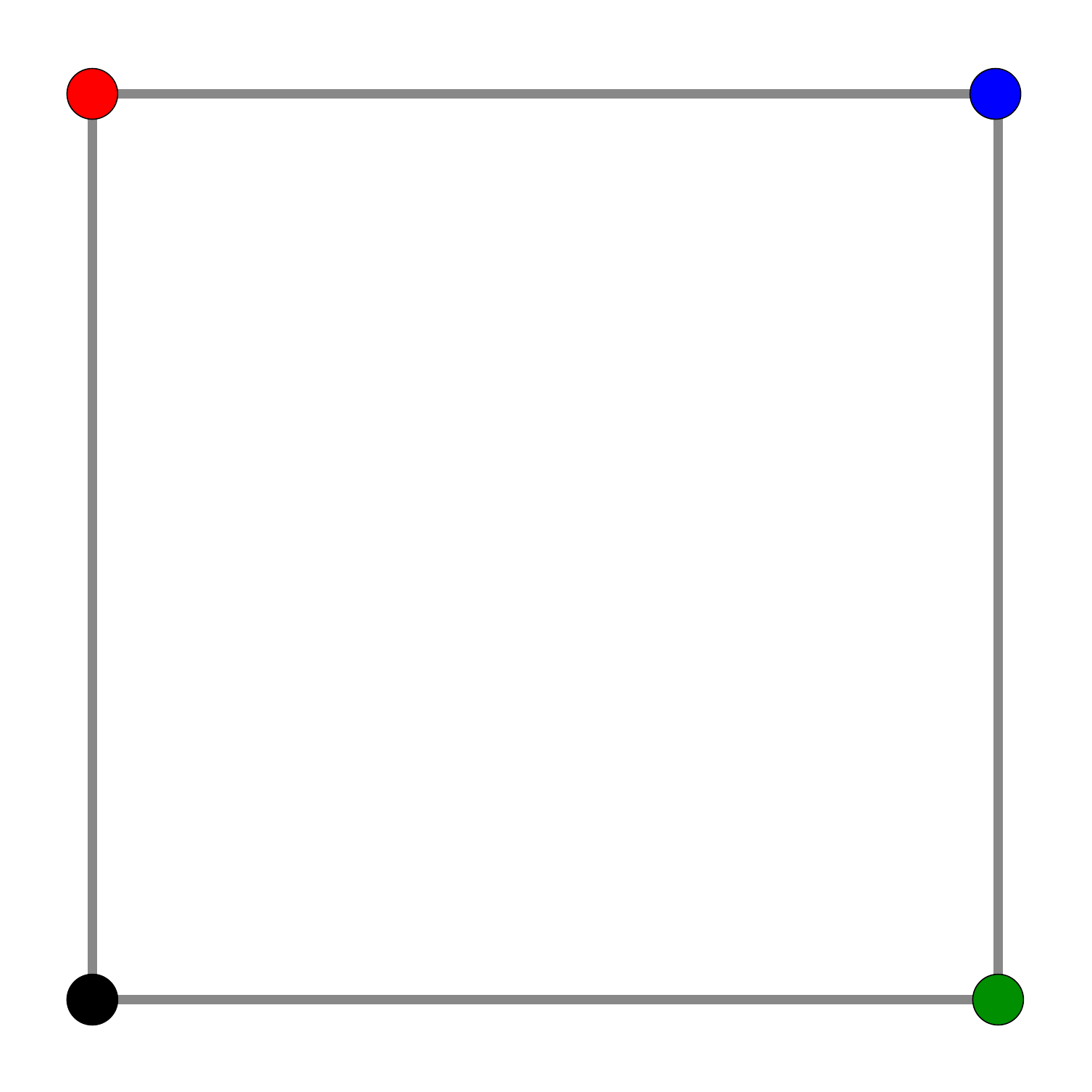}\caption{\label{fig:overview}(a)
  Application graph $G_{a}$ with $4$-way partition indicated by
  colors. (b) Com\-mu\-ni\-ca\-tion graph $G_{c}$ induced by $G_{a}$
  and the partition. $G_c$ expresses the neighborhood relations of
  $G_{a}$'s blocks. Edge weights (shown through width) indicate
  communication volumes between blocks. (c) Processor graph $G_{p}$.
  Nodes and edges represent the PEs and the communication links,
  respectively. Com\-mu\-ni\-ca\-tion between the green and the red
  block in $G_{c}$, \ie via $e_{c}$, requires two hops in $G_{p}$.  }
% Here, bandwidths of edges in $G_p$ are identical.}
\vspace{-3ex}
\end{figure}

%\hmey{Absatz geloescht, da er nicht definierte Termini enthaelt (Metis, Kahip), redundant ist, viele unnoetige Infos enthaelt und an der angedachten Position komisch wirkte:
%We perform
%extensive experiments with state-of-the-art mapping algorithms
%including variants (of two greedy algorithm) which, to the best of our
%knowledge, are new. The experiments involve two classes of application
%graphs (8 meshes, 12 complex networks), three ways to partition the
%application graphs (one through $\kahip$, two through \metis), six
%processor graphs (3 grids, 3 tori) and 8 mapping algorithms.
%}

\noindent \emph{Motivation.} Communication costs are crucial for the
scalability of many parallel applications. Static mapping, in turn, is
crucial when it comes to keeping communication costs under control
through (i) providing a partitioning with few edges between blocks and
(ii) mapping nearby blocks onto nearby PEs: due to the sparse nature
of many large-scale parallel computers,
communication costs may vary by several orders of magnitude depending
on the distance between the PEs
involved~\cite{Teresco2000269}. Also, numerous recent applications involve
massive \emph{complex networks} such as social networks or web graphs~\cite{costa2011analyzing}. These
networks usually lead to denser communication graphs and make improved
mapping strategies even more desirable.

\noindent \emph{Contribution.}
We investigate numerous algorithms for static mapping,
the scenario being that an application graph is first partitioned into
blocks, followed by a bijective mapping of the blocks onto the nodes
of a processor graph. The graph partitioners we employ are the
state-of-the-art packages \metis~\cite{Karypis13a} and
\kahip~\cite{Sanders2013a}. While \metis is widely used for graph
partitioning and has been employed for mapping before, it is the first
time that the high-quality partitioner \kahip is used in the mapping
context.

To assess and improve the performance of mapping algorithms, we
implement several state-of-the-art methods. Moreover and more
importantly, we develop and implement two new algorithms as
straightforward, yet very effective adaptations of existing greedy
algorithms.

The three most striking results of our extensive mapping experiments
on meshes and complex networks as application graphs, as well as grids
and tori as processor graphs, are: First, the strengths
and weaknesses of the mapping algorithms are, to a large extent,
independent of the class of application graphs (mesh or complex
network) and the processor graphs. Second, the graph partitioner and
its partitioning quality is of minor importance for the quality
of the mapping. Third, for complex networks as application graphs,
one of our new mapping algorithms always yields the best quality in terms 
maximum congestion. In case of meshes, this mapping algorithm always 
leads in terms of maximum congestion \emph{and} maximum dilation.

%%%%%%%%%%%%%%%%%%%%%%%%%%%%%%%%%%%%%%%%%%%%%%
%
%
%%%%%%%%%%%%%%%%%%%%%%%%%%%%%%%%%%%%%%%%%%%%%%
\section{Preliminaries}
\label{sec:prelim}
%
%%%%%%%%%%%%%%%%%%%%%%%%%%%%%%%%%%%
\subsection{Problem Description}
\label{sub:problem}

We represent the communication of a parallel appli\-ca\-tion as a
graph $G_c = (V_c, E_c, \omega_c)$, where a weight $\omega_c(\{u,
v\})$, $\{u, v\} \in E_c$, indicates the volume of communication
between $u$ and $v$, \ie between the corresponding blocks of the
application graph.

The parallel computer takes the form of a graph $G_p = (V_p, E_p,
\omega_p)$, the \emph{processor graph}. Here, $\omega_p~:~E_p \mapsto
\NN$ indicates the bandwidths of the physical communication links. We
require $\vert V_p \vert = \vert V_c \vert$.

Our aim is to find a bijective topology mapping (short \emph{mapping})
$\Pi: V_c \mapsto V_p$ that minimizes the overhead due to
communication between the processes. A first graph-theoretic
definition of the overhead (costs) was given
in~\cite{Rosenberg1980a}. In the following we present three aspects of
overhead (for more in-depth definitions see~\cite{hoefler-topomap}).

An edge $e_c = \{u_c, v_c\}$ of $G_c$ gives rise to communication
between $\Pi(u_c)$ and $\Pi(v_c)$ on $G_p$. Sending a unit of
information along a path $P$ in $G_p$ with edges $e_1, \dots e_l$
takes time at least $t(P) = \sum_{i=1}^l(1 / \omega_p(e_i))$. Sending
all information via an edge $e_c = \{u_c, v_c\} \in E_c$, \ie from
processes in $u_c$ to processes in $v_c$, then takes time at least

\begin{align}
\label{eq:dt}
d(e_c) &= d(e_c, \Pi) = \omega_c(u_c, v_c)~t(\Pi(u_c),
\Pi(v_c)),\mbox{ where}\nonumber\\
t(u_p, v_p) &= \min(t(P)~\vert~\mbox{$P$ connects
  $u_p$ and $v_p$})
\end{align}
Thus, \emph{maximum} and \emph{average dilation}, defined as
\begin{align}
\label{eq:dil} 
mD(\Pi) &= \max_{e_c \in E_c} d(e_c) \mbox{~and}\\
aD(\Pi) &= (\sum_{e_c \in E_c} d(e_c)) / \vert E_c \vert \mbox{~,}
\end{align} 
respectively, provide lower bounds for the communication time of a
parallel application, $mD(\Pi)$ being the tighter lower bound.

When multiple messages are exchanged at the same time, more than one
of them may be routed via the same edge. Hence, if $c(e)$ denotes
the total volume of communication routed via $e \in E_p$, divided by
the bandwidth $\omega(e)$, then the maximum (weighted) congestion
\begin{equation}
\label{eq:maxCon}
mC(\Pi) := max_{e_c \in E_c} c(e_c)
\end{equation} 
provides another lower bound for the time. Minimizing $mD(\Pi)$,
$aD(\Pi)$ and $mC(\Pi)$ is NP-hard, cf.\ Garey and
Johnson~\cite{Garey:1979:CIG:578533} and more recent
work~\cite{hoefler-topomap,ManKim1991246}. Due to the problem's
complexity, exact mapping methods are only practical in special
cases. Leighton's book~\cite{Leighton92introduction} discusses
embeddings between arrays, trees, and hypercubes.

As in previous studies~\cite{hoefler-topomap}, we assume that the
routing algorithm sends the messages on uniformly distributed shortest
paths in $G_p$. In particular, the routing algorithm is oblivious to
the utilization of the parallel system. 

%Different objective functions have been proposed for estimating this
%communication overhead~\cite{Pellegrini11static}. Since it is
%difficult to capture all relevant hardware characteristics, most
%authors concentrate on simplified cost functions.
%Global sum type cost functions do not have the drawback of requiring
%global updates. Moreover, discontinuities in their search space,
%which may inhibit metaheuristics to be effective, are usually less
%pronounced than for maximum-based cost functions.

%%%%%%%%%%%%%%%%%%%%%%%%%%%%%%%%%%%
\subsection{Graph partitioning}
\label{sub:part_quality}

Given a graph $G=(V,E)$ and a number of blocks $k > 0$, the 
graph partitioning problem asks for a division of $V$ into $k$ pairwise
disjoint subsets $V_1, \dots, V_k$ (\emph{blocks}) such that no block
is larger than
%
%\begin{equation}
%\vspace{-0.25ex}
%\label{balance_constraint}
%
$(1+\varepsilon)\cdot \left\lceil\frac{|V|}{k}\right\rceil,$
%\end{equation}
%
where $\varepsilon \geq 0$ is the allowed imbalance. The most widely
used objective function 
is the \emph{edge cut} (whose minimization is
$\mathcal{NP}$-hard~\cite{Garey:1979:CIG:578533}), \ie, the total weight of the edges between
different blocks.  Yet, a more important factor for modeling the communication cost of parallel 
iterative graph algorithms seems to be the \emph{maximum communication volume} (MCV)~\cite{Hendrickson_graphpartitioning}, 
which has received growing attention recently, \eg in the 10th DIMACS Implementation Challenge
on graph partitioning. 
% TODO: fuer final wieder Referenz reinnehmen
MCV considers the worst
communication volume taken over all blocks $V_p$ ($1 \leq p \leq k$)
and thus penalizes imbalanced communication:
%\begin{equation}
%\label{eq:comm-volume}
$MCV(V_1, \dots, V_k) := \max_p \sum_{v \in V_p} |\{ V_i ~|~ \exists \{u, v\} \in E \mbox{ with } u \in V_i  \neq V_p\}|.$
%\end{equation}

%%%%%%%%%%%%%%%%%%%%%%%%%%%%%%%%%%%
\section{Related Work}
\label{sub:related}
In this section we give a brief overview of algorithms for static
mapping. 
%It is partially based on a recent graph partitioning
%survey to which we contributed a section on
%mapping~\cite{Buluc2013a}. 
More on topology mapping can be found
in~\cite{Aubanel09resource,6495451} and particularly in Pellegrini's
survey~\cite{Pellegrini11static}.

It should be mentioned that partitioning and mapping can be done
simultaneously, \ie communication between PEs is taken into account
already during
partitioning~\cite{DBLP:journals/fgcs/WalshawC01,HuangAB06pagrid,MoulitsasK08architecture}.
In this paper, however, we focus on the complementary approach where
partitioning and topology mapping form different stages of a software
pipeline.

One can apply a wide range of optimization techniques to the topology
mapping problem. Hoefler and Snir~\cite{hoefler-topomap} employ (among
others) the Reverse Cuthill-McKee (RCM) algorithm, originally devised
for minimizing the bandwidth of a sparse matrix~\cite{Cuthill69a}. If
both $G_c$ and $G_p$ are sparse, the simultaneous optimization of both
graph layouts can lead to good mapping
results~\cite{Pellegrini07scotch}.

A common approach to static mapping, \ie, partitioning and topology
mapping combined, is to recursively partition $G_a$ \emph{and} $G_p$
in the same fashion, \ie such that the number of blocks and sub-blocks
per block is equal on each level~\cite{Pellegrini94static}. Such a
hierarchical approach to mapping may take into account the actual
hierarchy of a heterogeneous multi-core
cluster~\cite{chan2012impact}. Typically, the number of sub-blocks per
block is small. Thus, on the scope of an individual block, an optimal
mapping of a block's sub-blocks can be found by evaluating all
possibilities. If the number of sub-blocks is two, the method is
called \emph{dual recursive bisection}. It has been shown effective in
the software \scotch\cite{Pellegrini07scotch}. While an optimal
mapping of a block's sub-blocks on the scope of an individual block is
not an issue in dual recursive bisection, neighboring relations
between sub-blocks of different blocks still pose a challenge.
In this paper we apply dual recursive bisection to the pair $(G_c,
G_p)$ instead of $(G_a, G_p)$. This (basic) form of dual recursive
bisection does not take into account neighboring relations between the
sub-blocks of different blocks (as in~\cite{hoefler-topomap}).

Greedy approaches such as the ones by Hoefler and Snir~\cite{hoefler-topomap}
and Brandfass~\etal~\cite{Brandfass2013372} build on the idea of
increasing a mapping by successively adding new maps $v_c \rightarrow
v_p$ such that (i) $v_c$ has maximal communication volume with one or
all of the already mapped vertices of $G_c$ and (ii) $v_p$ has minimal
distance to one or all of the already mapped vertices of $G_p$. For
more details see Sections~\ref{subsec:greedy}.

Hoefler and Snir~\cite{hoefler-topomap} compare RCM, DRB and a greedy
approach experimentally on abstractions of three real architectures. While their
results do not show a clear winner, they confirm previous
studies~\cite{Pellegrini11static} in that performing mapping at all is
worthwhile.  It is important to note, however, that Hoefler and Snir
perform mapping from reordered matrices, not from partitioned graphs
as we do here.

Many metaheuristics have been used to solve the mapping problem.
U\c{c}ar \etal~\cite{Ucar200632} implement a large variety of methods
within a clustering approach, among them genetic algorithms, simulated
annealing, tabu search, and particle swarm optimization. The authors
require, however, that the processor graph is homogeneous, \ie $t(u_p,
v_p)$ depends only on whether $u_p = v_p$ or not. Our approach is more
general than theirs in that we allow $t(u_p, v_p)$ to take different
values for $u_p \not= v_p$ (see Equation~\ref{eq:dt}). 

Bhatele \etal~\cite{Bhatele:2011:AHT:2063384.2063486} discuss topology-aware
mappings of different MPI communication patterns on
emerging architectures. Better mappings avoid communication hot spots
and reduce communication times significantly. Geometric information
can also be helpful for finding good mappings on regular architectures
such as tori~\cite{6063073}.

%%%%%%%%%%%%%%%%%%%%%%%%%%%%%%%%%%%%%%%%%%%%%%
%
%
%%%%%%%%%%%%%%%%%%%%%%%%%%%%%%%%%%%%%%%%%%%%%%
\section{Methods for Topology Mapping}
\label{sec:algo}
The simplest topology mapping is the identity, \ie when block $i$ of
the application graph (or vertex $i$ of the communication graph $G_c$)
is mapped onto node $i$ of the processor graph $G_p$, $1 \leq i \leq
k$. We refer to this mapping as $\initial$. It depends on how the
graph partitioner, in our case \metis or \kahip, numbers the blocks
and on how the nodes of $G_p$ are numbered. In our experiments $G_p$
is a 2D or 3D grid or torus since such topologies are used in real architectures,
\eg tori for BlueGene~\cite{BlueGene02overview}.
The nodes are ordered
lexicographically \wrt the nodes' canonical integer coordinates. We
also carry along a mapping called $\random$, where the bijection $\Pi:
\{1, \dots, k\} \mapsto \{1, \dots, k\}$ is random. The latter is done
for comparison purposes, keeping in mind that $\random$ is usually a
very bad solution.

Four algorithms in our collection, \ie, $\rcm$, $\durebi$,
$\greedyall$ and $\greedymin$ are from the literature (for \rcm and
$\durebi$ see Section~\ref{sec:intro}
and~\cite{Cuthill69a,hoefler-topomap}). Algorithms $\greedyall$
and $\greedymin$ are described in Section~\ref{subsec:greedy} 
(also see the references therein). There we also specify the last two
algorithms, $\greedyallc$ and $\greedyminc$, which are variants of
$\greedyall$ and $\greedymin$ and which, to our knowledge, are new.

\subsection{Greedy Algorithms}
\label{subsec:greedy}
As a prerequisite for the algorithms described in this section we need
to compute $t(\cdot, \cdot)$ once for a given processor graph $G_p$
(see Equation~\ref{eq:dt}). Using Johnson's
algorithm~\cite{Johnson77a,Cormen2001a} we can do so in time
$\bigO(\vert V_p \vert^2 \log \vert V_p \vert$ $+$ $\vert V_p \vert
\vert E_p \vert)$. Since $G_p$ is sparse, this amounts to $\bigO(\vert
V_p \vert^2 \log \vert V_p \vert) = \bigO(\vert V_c \vert^2 \log \vert
V_c \vert)$. This running time is not included in the running times
for the greedy algorithms in this section, as $t(\cdot, \cdot)$ is
computed only once for a given processor graph.

The mapping algorithm $\greedyall$ consists of the ``construction
method'' proposed in~\cite{Brandfass2013372}. Using our terminology,
the algorithm starts by picking a node $v^0_c$ of $G_c$ such that
$\sum_{e = \{v^0_c, v_c\} \in E_c} \omega(e)$ is maximal, \ie $v^0_c$
is a vertex whose communication with neighboring vertices is
heaviest. Then, it computes for each vertex $v_p$ of $G_p$ the term
$\sum_{u_p \in V_p} t(u_p, v_p)$. Here, $t(u_p, v_p)$ is the (minimum)
time needed to send a unit of information from $u_p$ to $v_p$ (see
Section~\ref{sub:problem}). A vertex $v^0_p$ for which this sum is
minimal (a most central node in $G_p$ \wrt communication time) then
becomes the vertex onto which $v^0_c$ is mapped. The experiments of
this paper involve processor graphs which are grids and tori. On the
latter all nodes are equally central.

The remaining pairs $(v^i_c, v^i_p)$, $i \geq 1$, are formed as
follows. First, a not yet mapped vertex $v^i_c$ of $G_c$ is found such
that $\sum^{i-1}_{j=0}\omega_c(\{v^j_c, v^{i}_c\})$ is maximal, \ie
$v^i_c$ is a vertex that communicates most heavily with the already
mapped vertices. Then, a not yet mapped vertex $v^i_p$ of $G_p$ is
found such that $\sum^{i-1}_{j=0}t(v^j_p, v^i_p)$ is minimal, \ie a
vertex that is most central \wrt the already mapped vertices of
$G_p$. Note that the choices of $v^i_c$ and $v^i_p$ are independent of
each other. Our implementation of $\greedyall$ has running time
$\bigO(\vert V_c \vert^2)$. This running time is achieved by updating
vectors $sum_c$ ($sum_p$) that, for each vertex $v_c$ ($v_p$) which
has not been mapped yet, stores the sum of the edge weights
(distances) to the vertices in $G_c$ ($G_p$) that have been mapped
already. We use the same two vectors in \greedyallc, see
Algorithm~\ref{algo:greedyallc}.

\label{sec:greedymin}
The mapping algorithm $\greedymin$ stems
from~\cite{hoefler-topomap}. Its general idea is the same as that
behind $\greedyall$. The only differences are that (i) $v^0_p$ is
picked randomly, (ii) $v^i_c$ ($i \geq 1$) is chosen such that
$max^{i-1}_{j=0}\omega_c(\{v^i_c, v^j_c\})$ is maximal, and (iii)
$v^i_p$ ($i \geq 1$) is chosen such that $t(v^{i-1}_p, v^i_p)$ is
minimal. Again, as in $\greedyall$, the choices of $v^i_c$ and $v^i_p$
are independent of each other. Our implementation of $\greedymin$
(which is less generic than that in~\cite{hoefler-topomap}) has
running time $\bigO(\vert V_c \vert^2)$.

\subsection{$\greedyallc$ and $\greedyminc$}
\label{sec:greedy+}
Neither $\greedyall$ nor $\greedymin$ link the choices of $v^c_i$ and
$v^p_i$. Both algorithms aim at (i) a high communication volume of
$v^c_i$ with all or one of the already mapped vertices of $G_c$ and
(ii) a high centrality of $v^p_i$ \wrt all or one of the already
mapped vertices of $G_p$. The actual increase of communication times
caused by the new pair $(v^c_i, v^p_i)$ (increase \wrt the partial
mapping defined so far) is not considered.
 
We therefore propose new variants $\greedyallc$ and
$\greedyminc$. They take this increase of communication time into
account. Specifically, the choice of $v^i_p$ depends on the choice of
$v^i_c$ (same as in $\greedyall\slash\greedymin$). Let
$v_p$ be a candidate for being mapped onto by $v^i_c$. Then, (minimal)
times of communication between $v^i_c$ and the vertices of $G_c$ that
have been mapped before, \ie $v^0_c, \dots v^{i-1}_c$, amount to

\begin{equation}
\label{eq+}
\sum(\omega_c(\{v^i_c, v_c\}~t(v_p, \Pi(v_c)))~\mid~\{v^i_c, v_c\}
\in E_c~,v_c \in \{v^0_c, \dots v^{i-1}_c\})
\end{equation}

Analogous to $\greedyallc$ and $\greedyminc$, we set $v^i_p$ to some
$v_p$ such that the expression in Equation~\ref{eq+} is
\emph{minimal}. Thus, our objective function for choosing $v_p$, \ie
Equation~\ref{eq+}, is about actual communication times and not just
distances on $G_p$. We have experimented with replacing the sum in
Equation~\ref{eq+} by the maximum and found out that this tends to
decrease the quality of the mappings. For the pseudocode of
$\greedyallc$ see Algorithm~\ref{algo:greedyallc}.

\hmey{Latex warning in pseudocode}

\begin{algorithm}[!h]
\caption{The algorithm $\greedyallc$. \newline \underline{Input}:
  Communication graph $G_c=(V_c, E_c, \omega_c)$ and processor graph
  $G_p=(V_p, E_p, \omega_p)$ with $\vert V_c \vert = \vert V_p
  \vert$.\newline \underline{Output}: Pairs $(v^{i}_c, v^{i}_p)$, $0 \leq i <
  \vert V_c \vert$, such that $\Pi~:~V_c \mapsto V_p$ defined by
  $\Pi(v^{i}_c) = v^{i}_p$ is a bijective mapping with low values of
  $mC(\Pi)$, $mD(\Pi)$ and $aD(\Pi)$.}

\label{algo:greedyallc}
\begin{algorithmic}[1]
\State Find $v^0_c \in V_c$ with maximal $\sum_{e = \{v^0_c, v_c\} \in E_c} \omega(e)$
\State Find $v^0_p \in V_p$ with minimal $\sum_{u_p \in V_p} t(u_p, v_p)$
\State Create vectors $sum_c$ and $sum_p$ of length $\vert V_c \vert$
\State Initialize entries of $sum_c$ to zero and entries of  $sum_p$ to one
%\ForAll{$e_c=\{v^0_c, t\} \in E_c$}
%\State $sum_c[t] \gets \omega_c(e_c)$
%\EndFor
%\For{$i \gets 1, \dots, \vert V_p \vert - 1$}
%\State $sum_p[i] \gets t(i, v^0_p)$ 
%\EndFor
%\State $sum_c(v^0_c) \gets -1$ /* Mark $v^0_c$ as \emph{assigned} */
%\State $sum_p(v^0_p) \gets \intmax$  /* Mark $v^0_p$ as \emph {assigned} */
%\State Find $v^1_c$ such that $sum_c(v^1_c)$ is maximal
%\State Find $v^1_p$ such that $sum_p(v^1_p)$ is minimal
\For{$i \gets 0, \dots, \vert V_c \vert - 2$}
\State $sum_c(v^i_c) \gets -1$ /* Mark $v^i_c$ as \emph{assigned} */
\State $sum_p(v^i_p) \gets \intmax$  /* Mark $v^i_p$ as \emph{assigned} */
\ForAll{$e_c=\{v^i_c, w\} \in E_c$}
\If{$sum_c[w] \geq 0$} /* $w$ is not yet assigned */
\State $sum_c[w] \gets sum_c[w] + \omega_c(e_c)$
\EndIf
\EndFor
\State Pick $v^{i+1}_c$ such that $sum_c(v^{i+1}_c)$ is maximal
\For{$j \gets 1, \dots, \vert V_p \vert - 1$}
\If{$sum_p[j] < \intmax$}
\State /* $j$ is not yet assigned */
\State $sum_p[j] \gets 0$
\ForAll{$e_c=\{v^{i+1}_c, w\} \in E_c$}
\If{$sum_c[w] < 0$}
\State /* $w$ has already been assigned, \ie $\Pi(w)$ is defined */
\State $sum_p[j] \gets sum_p[j] + \omega_c(e_c) * t(j, \Pi(w))$
\EndIf
\EndFor
\EndIf
\EndFor
\State Pick $v^{i+1}_p$ such that $sum_p(v^{i+1}_p)$ is maximal
\EndFor
\end{algorithmic}
\end{algorithm}

\begin{proposition}
 The running time of $\greedyallc$ is $\bigO(\vert V_c \vert \vert E_c \vert)$.
\label{prop:time}
\end{proposition}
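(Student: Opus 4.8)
The plan is to charge the work line by line in Algorithm~\ref{algo:greedyallc} and then sum over the iterations of the main loop, using that every lookup of $t(\cdot,\cdot)$ costs $\bigO(1)$ because the table was precomputed (and, by the statement preceding the pseudocode, its construction is not counted here). For the setup I would note that line~1 costs $\bigO(\vert E_c\vert)$ (one pass over the edges incident to each vertex), line~2 costs $\bigO(\vert V_p\vert^2)$ (evaluate $\sum_{u_p\in V_p}t(u_p,v_p)$ for every $v_p$), and lines~3--4 cost $\bigO(\vert V_c\vert)$; since $\vert V_p\vert=\vert V_c\vert$ the setup is $\bigO(\vert V_c\vert^2+\vert E_c\vert)$.

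For a single iteration $i$ of the loop I would bound: lines~6--7 by $\bigO(1)$; lines~8--12 by $\bigO(\degree_{G_c}(v^i_c))$, since the \textbf{ForAll} ranges over the edges incident to $v^i_c$; line~13 by $\bigO(\vert V_c\vert)$ for the linear scan of $sum_c$; lines~14--25 by $\bigO(\vert V_p\vert\cdot\degree_{G_c}(v^{i+1}_c))$, since the outer \textbf{For} makes $\vert V_p\vert-1$ passes and, inside each pass, the \textbf{ForAll} in lines~18--23 ranges only over the edges incident to $v^{i+1}_c$; and line~26 by $\bigO(\vert V_p\vert)$ for the scan of $sum_p$. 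Summing over $i=0,\dots,\vert V_c\vert-2$ and using the handshaking identity $\sum_i\degree_{G_c}(v^i_c)\le 2\vert E_c\vert$ yields $\bigO\bigl(\vert E_c\vert+\vert V_c\vert^2+\vert V_p\vert\,\vert E_c\vert+\vert V_c\vert\,\vert V_p\vert\bigr)=\bigO(\vert V_c\vert^2+\vert V_c\vert\,\vert E_c\vert)$ for the loop, and hence for the whole algorithm together with the setup.

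It then remains to absorb the leftover $\bigO(\vert V_c\vert^2)$ term, which comes from line~2 and from the linear scans in lines~13 and~26. I would do this by observing that $G_c$ (being the communication graph of a connected application graph) has no isolated vertex, so $\vert E_c\vert\ge\vert V_c\vert/2$ and thus $\vert V_c\vert^2=\bigO(\vert V_c\vert\,\vert E_c\vert)$; this gives the claimed $\bigO(\vert V_c\vert\,\vert E_c\vert)$.

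I expect the step needing genuine care to be the bound on lines~14--25: the point is that the inner \textbf{ForAll} iterates over the edges at the single vertex $v^{i+1}_c$, not over all of $E_c$, so the per-iteration cost is $\bigO(\vert V_p\vert\,\degree_{G_c}(v^{i+1}_c))$; summing the degrees converts the $\vert V_p\vert$ factor into the product $\vert V_c\vert\,\vert E_c\vert$, which is exactly what distinguishes this bound from the $\bigO(\vert V_c\vert^2)$ bound of $\greedyall$ (where the analogous step is a single $\bigO(\vert V_c\vert)$ scan). The only other subtlety is the mild assumption used to swallow the $\bigO(\vert V_c\vert^2)$ terms; without it the honest statement would read $\bigO(\vert V_c\vert^2+\vert V_c\vert\,\vert E_c\vert)$.
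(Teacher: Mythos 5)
Your proof is correct and takes essentially the same route as the paper's: a line-by-line accounting of the loop nest in which the decisive step is bounding lines~14--25 by $\bigO(\vert V_p\vert\cdot\degree(v^{i+1}_c))$ per outer iteration and summing the degrees via the handshake identity to obtain $\bigO(\vert V_c\vert\,\vert E_c\vert)$. You are in fact more careful than the paper, which dismisses the $\bigO(\vert V_c\vert)$ scans in lines~13 and~26 (and hence the $\bigO(\vert V_c\vert^2)$ term) with a one-line remark and never makes explicit the assumption $\vert E_c\vert=\Omega(\vert V_c\vert)$ (no isolated vertices in $G_c$) that you correctly identify as needed to absorb that term into the claimed bound.
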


\begin{proof}
The outermost loop from line 5 to line 27 and the inner loop from line
8 to line 12 take amortized time $\bigO(\vert E_c \vert)$. So does the
outer loop from line 14 to line 25 and the inner loop from
line 18 to line 23. Since the latter two loops are contained in the
outermost loop from line 5 to line 27, the running time of
Algorithm~\ref{algo:greedyallc} is indeed $\bigO(\vert V_c \vert \vert
E_c \vert)$. Even a trivial implementation of lines 13 and 26 (with
running time $\bigO(\vert V_c \vert)$) does not change the result.
\end{proof}

The running time for $\greedymin$ is the same as for $\greedyallc$
because the two algorithms differ only at lines 1 to 4, and the
running times of both algorithms are not determined by this part.
%

%%%%%%%%%%%%%%%%%%%%%%%%%%%%%%%%%%%%%%%%%%%%%%
%
%
%%%%%%%%%%%%%%%%%%%%%%%%%%%%%%%%%%%%%%%%%%%%%%
\section{Experiments}
\label{sec:exp}
In this section we specify our test instances, our experimental setup
and the way we evaluate the mapping algorithms.

\paragraph{Test Instances.}
\label{sub:exp-instances}
The application graphs fall into two classes: %\walshawlarge and
%\complexnets. 
The class \walshawlarge consists of the eight largest graphs in
Walshaw's graph partitioning archive~\cite{SoperWC04combined}, and the
class \complexnets consists of 12 complex networks (see
Tables~\ref{tab:walshaw} and~\ref{tab:complex}). The latter form a
subset of the 15 complex networks used in~\cite{Safro2012a}
%
% TODO: fuer finale Version auch SEA-Paper zitieren!
%
for partitioning experiments. It turned out, however, that $\kahip$
[$\gpmetis$ with $k$-way partitioning, respectively], while respecting
the allowed imbalance, occasionally generated empty blocks for the
complex network \emph{p2p-Gnutella} [\emph{as-22july06} and
  \emph{loc-gowalla\_edges}]. Using $\gpmetis$ with \emph{recursive
  bisection} instead of k-way partitioning was not an option because
$\gpmetis$ then quite often violated the balance constraint and
produced blocks heavier than $(1+\epsilon)$ times the average block
size (only on complex networks).
For each of the classes \walshawlarge and
\complexnets the benchmarking comprises the following processor graphs.
\begin{itemize}
    \item {\small 2DGrid($16 \times 16$), 2DGrid($32 \times 32$), 3DGrid($8 \times 8 \times 8$)}
    \item {\small 2DTorus($16 \times 16$), 2DTorus($32 \times 32$), 3DTorus($8 \times 8 \times 8$)}
\end{itemize}

\begin{table}[]
\caption{Meshes used for benchmarking}
\begin{center}
%\begin{footnotesize}
%\scalebox{0.8}{
  \begin{tabular}{ l | r | r }
    Name & \#vertices & \#edges\\ \hline \hline
fe\_tooth  & \numprint{78136}   & \numprint{452591} \\\hline
fe\_rotor  & \numprint{99617}   & \numprint{662431} \\\hline
598a       & \numprint{110971}   & \numprint{741934} \\\hline
fe\_ocean  & \numprint{143437}  & \numprint{409593} \\\hline
144        & \numprint{144649}   & \numprint{1074391} \\\hline
wave       & \numprint{156317}   & \numprint{1059331} \\\hline
m14b       & \numprint{214765}   & \numprint{1679018} \\\hline
auto       & \numprint{448695}   & \numprint{3314611} \\\hline
  \end{tabular}
%}
%\end{footnotesize}
\end{center}
\label{tab:walshaw}
\end{table}

\begin{table*}[]
\caption{Complex networks used for benchmarking.}
\begin{center}
%\begin{footnotesize}
\scalebox{0.8}{
  \begin{tabular}{ l | r | r | c }
    Name & \#vertices & \#edges & Type\\ \hline \hline
PGPgiantcompo         & \numprint{10680}  & \numprint{24316}    & largest connected component in network of PGP users\\\hline
email-EuAll           & \numprint{16805}  & \numprint{60260}    & network of connections via email\\\hline
soc-Slashdot0902      & \numprint{28550}  & \numprint{379445}   & news network\\\hline
loc-brightkite\_edges & \numprint{56739}  & \numprint{212945}   & location-based friendship network\\\hline
coAuthorsCiteseer     & \numprint{227320} & \numprint{814134}   & citation network\\\hline
wiki-Talk             & \numprint{232314} & \numprint{1458806}  & network of user interactions through edits\\\hline
citationCiteseer      & \numprint{268495} & \numprint{1156647}  & citation network\\\hline
coAuthorsDBLP         & \numprint{299067} & \numprint{977676}   & citation network\\\hline
web-Google            & \numprint{356648} & \numprint{2093324}  & hyperlink network of web pages\\\hline
coPapersCiteseer      & \numprint{434102} & \numprint{16036720} & citation network\\\hline
coPapersDBLP          & \numprint{540486} & \numprint{15245729} & citation network\\\hline
as-skitter            & \numprint{554930} & \numprint{5797663}  & network of internet service providers\\\hline
  \end{tabular}}
%\end{footnotesize}
\end{center}
\label{tab:complex}
\end{table*}

\paragraph{Experimental Setup.}
\label{sub:exp-setup}
All computations are sequential and done on a workstation with two
4-core Intel(R) Core(TM) i7-2600K processors at 3.40GHz. Our code is written
in C++ and compiled with GCC 4.7.1. 

\paragraph{Evaluation.}
\label{sub:exp-evaluation}
The benchmarking of the mapping algorithms described in
Section~\ref{sec:algo} is done separately on the classes \walshawlarge
and \complexnets. First, graphs from both classes are partitioned into
256, 512 and 1024 parts using the graph partitioner \kahip v. 0.62
({\tt http://algo2.iti.kit.edu/documents/kahip/})
\cite{dissSchulz}.
In particular,
the meshes and social networks are partitioned with the configuration
\emph{eco} and \emph{ecosocial}, respectively. The allowed imbalance
is always $1.03$, \ie $\epsilon = 3\%$. To recursively bipartition $G_c$ and $G_p$ during
$\durebi$, we also use \kahip (configurations \emph{fast} and
\emph{ecofast}, perfect balance).

Since the partitioning process depends on random choices, we run \kahip with 20
different seeds.  For each seed we construct a communication graph
$G_c$ from the partition, map $G_c$ onto all processor graphs with the
same number of vertices and then compute the minimum, the arithmetic
mean and the maximum of the mapping's runtime $t$, $mC$ (see
Equation~\ref{eq:maxCon}), $mD$ and $aD$ (see
Equation~\ref{eq:dil}). Thus we arrive at the values $t_{min}$,
$t_{mean}$, $t_{max}$, $mC_{min}$, etc. (twelve values for each
combination of $G_c$, $G_p$, and a mapping algorithm).

Next we form the geometric means of the twelve values over all graphs
in \walshawlarge and \complexnets, respectively. Thus we arrive at
twelve values $t^{gm}_{min}$, $\dots$ for any combination of a graph
class (\walshawlarge or \complexnets), a processor graph, and a
mapping algorithm. Finally, the last 9 values (all except runtimes) are
set into proportion to the corresponding values for $\initial$. This
yields the values $QmC^{gm}_{min}$, $QmC^{gm}_{mean}$,
$QmC^{gm}_{max}$, $QmD^{gm}_{min}$, $QmD^{gm}_{mean}$,
$QmD^{gm}_{max}$, $QaD^{gm}_{min}$, $QaD^{gm}_{mean}$ and
$QaD^{gm}_{max}$. A $Q$-value smaller than one means that the
quality is higher than that of $\initial$ because we are minimizing.

We also investigate the influence of graph partitioning on the quality
of the mapping algorithms. In addition to using \kahip as described
above, we apply two variants of \metis v. 5.1.0~\cite{Karypis13a}

\begin{enumerate}
\item We run $\gpmetis$ with the option of $k$-way partitioning, an
  allowed imbalance of $1.03$ and $20$ seeds (imbalance and 
  seed number are as for \kahip).
\item We run \ndmetis with $20$ seeds. This results in a fill-reducing
  ordering of $G_a$'s adjacency matrix. The ordering is then turned
  into a partitioning of $G_a$ by going through the vertices in the
  new order and assigning block numbers such that all blocks have
  almost equal size (maximal deviation is one vertex). We are aware
  that using $\ndmetis$ in this way is not a good choice in view of
  partitioning quality ($\ndmetis$ is made for other purposes). We
  proceed like this, however, \emph{because} we wish to test our
  collection of mapping algorithms on partitions with mediocre edge cut
  and MCV.
\end{enumerate}

We indicate the \metis-based graph partitioning that is underlying a
mapping algorithm by using the subscripts $\textsc{g}$ and
$\textsc{n}$ when employing $\gpmetis$ and $\ndmetis$,
respectively. As an example, $\greedyallcMO$ means that we applied
$\greedyallc$ to partitions obtained via $\ndmetis$.

Finally, for each $x \in \{t^{gm}_{min}, \dots\, QaD^{gm}_{max}\}$
(this set has 12 values), we form quotients $\mathcal{Q}x$ of the form

\begin{equation*}
\frac{\mbox{$x$ from \kahip partitions}}{\mbox{$x$ from \gpmetis
    partitions}} \mbox{~,~} \frac{\mbox{$x$ from \kahip
    partitions}}{\mbox{$x$ from \ndmetis partitions}}
\end{equation*}

As an example, $\mathcal{Q}aD^{gm}_{max} = 3.2098$ for $\greedyallcMO$
in Table~\ref{tab:app:meshesComp:2DGrid1} means that $aD^{gm}_{max}$
is worse by a factor of $3.2098$ if $\ndmetis$ is used instead of
$\kahip$ for mapping meshes onto 2DGrid($16 \times 16$).

%%%%%%%%%%%%%%%%%%%%%%%%%%%%%%%%%%%%%%%%%%%%%%
%
%
%%%%%%%%%%%%%%%%%%%%%%%%%%%%%%%%%%%%%%%%%%%%%%
\section{Results}
\label{sec:results}
%
%%%%%%%%%%%%%%%%%%%%%%%%%%%%%%%%%%%%%%%%%%%%%%
\subsection{Mapping of Meshes onto Grids and Tori}
\label{sub:exp-grids-tori-meshes}

\begin{table}[htb]
  \caption{Performance of $\gpmetis$ and $\ndmetis$ on meshes compared
    to $\kahip$. Values smaller than one indicate that
    $\gpmetis\slash\ndmetis$ is faster or that the quality the
    $\gpmetis\slash\ndmetis$-partitions is higher.}
\begin{center}
%\begin{footnotesize}
  \begin{tabular}{ l | c c c | c c c | c c c}
           & $Time^{gm}_{min}$ & $Time^{gm}_{mean}$ &
    $Time^{gm}_{max}$ & $Cut^{gm}_{min}$ & $Cut^{gm}_{mean}$ &
    $Cut^{gm}_{max}$ & $MCV^{gm}_{min}$ & $MCV^{gm}_{mean}$ &
    $MCV^{gm}_{max}$ \\\hline \hline

    $\gpmetis$ & 0.0462 & 0.0451 & 0.0440 & 1.0101 & 1.0101 & 1.0121 & 0.9970 & 1.0449 & 1.1601\\
    $\ndmetis$ & 0.1026 & 0.0999 & 0.0985 & 2.2075 & 2.2371 & 2.2472 & 6.0976 & 5.9880 & 5.7471  
\end{tabular}
%\end{footnotesize}
\end{center}
\label{tab:meansQuot}
\end{table}

Table~\ref{tab:meansQuot} shows a comparison of $\kahip$ partitions
with partitions from $\gpmetis$ and $\ndmetis$. We measure running
time, edge cut and MCV. As above, we record the best, mean and worst
result over $20$ seeds and calculate the geometric means of these
numbers over all meshes in our collection --- giving rise to the
numbers $Time^{gm}_{max}, \dots, MCV^{gm}_{max}$ in
Table~\ref{tab:meansQuot}. In terms of partition quality, $\gpmetis$
performs significantly poorer than $\kahip$ only in terms of
$MCV^{gm}_{mean}$ and $MCV^{gm}_{max}$. Here $\gpmetis$ is worse by
$4.49\%$ and $16.01\%$, respectively.
The partitions that we derived from $\ndmetis$ (in a deliberately
sub-optimal way) fall back drastically both in terms of the edge cut
and MCV. In particular, $Cut^{gm}_{mean} = 2.2371$ and
$MCV^{gm}_{mean} = 5.988$, which means that the edge cut from
$\ndmetis$ is more than double and that MCV increases almost six times
if $\ndmetis$ is used instead of $\kahip$.

Table~\ref{tab:app:meshes:2DTorus1} shows the quality of the mapping
algorithms for (i) partitions based on $\kahip$ and (ii) mapping onto
the $16 \times 16$ torus. Tables~I - V in the Appendix support the
following results.

\begin{table}[htb]
\caption{Mapping of meshes onto 2DTorus($16 \times 16$). Times
  $t^{gm}_{min}$, $t^{gm}_{mean}$ and $t^{gm}_{max }$ are in {\bf milliseconds}.}
\begin{center}
%\begin{footnotesize}
\scalebox{0.9}{
  \begin{tabular}{l || c c c | c  c  c | c  c  c  | c  c  c }
    Algo & $t^{gm}_{min}$ & $t^{gm}_{mean}$ & $t^{gm}_{max}$ & $QmC^{gm}_{min}$ &
    $QmC^{gm}_{mean}$ & $QmC^{gm}_{max}$ & $QmD^{gm}_{min}$ &
    $QmD^{gm}_{mean}$ & $QmD^{gm}_{max}$ & $QaD^{gm}_{min}$ &
    $QaD^{gm}_{mean}$ & $QaD^{gm}_{max}$\\ \hline
    $\random$     & 0.028 & 0.033 & 0.044 & 2.087 & 2.059 & 2.030 & 1.389 & 1.397 & 1.432 & 1.667 & 1.471 & 1.249\\
    $\rcm$        & 0.060 & 0.070 & 0.088 & 1.634 & 1.640 & 1.645 & 1.357 & 1.454 & 1.586 & 1.509 & 1.389 & 1.242\\
    $\durebi$     & 50.75 & 52.18 & 54.20 & 0.862 & 0.904 & 0.966 & 0.821 & 0.886 & 0.987 & 1.039 & 1.010 & 0.962\\
    $\greedyall$  & 0.948 & 0.971 & 0.997 & 1.310 & 1.316 & 1.324 & 1.252 & 1.291 & 1.369 & 1.359 & 1.263 & 1.164\\
    $\greedymin$  & 0.163 & 0.168 & 0.183 & 1.139 & 1.162 & 1.199 & 1.025 & 1.080 & 1.149 & 0.870 & 0.776 & {\bf 0.654}\\
    $\greedyallc$ & 0.918 & 0.953 & 0.987 & {\bf 0.683} & {\bf 0.707} & {\bf 0.736} & {\bf 0.665} & {\bf 0.706} & {\bf 0.766} & {\bf 0.730} & 0.780 & 0.871\\
    $\greedyminc$ & 0.869 & 0.939 & 0.100 & 0.793 & 0.813 & 0.844 & 0.739 & 0.789 & 0.849 & 0.745 & {\bf 0.756} & 0.780\\ \hline
  \end{tabular}
}
%\end{footnotesize}
\end{center}
\label{tab:app:meshes:2DTorus1}
\end{table}

\begin{enumerate}
\item The mapping algorithms $\random$, $\rcm$ and $\greedyall$ are worse than
  $\initial$ on all accounts. While this was expected for $\random$, our data show
  that very simple mapping strategies are not worthwhile if the underlying
  partition is good.

\item The algorithm $\greedymin$ beats $\initial$ only in terms of
  average dilation. The improvement is, however, a major one in some
  cases, \eg $QaD^{gm}_{mean} = 0.776$ and $QaD^{gm}_{max} = 0.654$
  for the $16 \times 16$ 2D torus (see
  Table~\ref{tab:app:meshes:2DTorus1}). Another strong point of
  $\greedymin$ is its low running time.
%and the only slightly more than linear increase in running time \wrt
%the size of $G_p$.

\item On all six processor graphs our new mapping algorithm
  $\greedyallc$ yields the best maximum congestion, $mC$, and the best
  maximum dilation, $mD$. This holds not only for the (geometric mean
  over all meshes of the) average over all seeds, but also if the best
  or the worst result is taken over all seeds. The quotients are
  between $0.556$ and $0.789$. In terms of running time, we are
  in-between that of $\greedymin$ and $\durebi$.

\item $\durebi$ yields many major improvements over $\initial$ and,
  discarding average dilation, is worse only once (in terms of
  $QmD^{gm}_{max}$ on the 3D torus, see Table~V in the
  Appendix). $\durebi$ often comes close to $\greedyallc$ and
  sometimes beats it on average dilation.

\item $\greedyminc$ has its strengths on tori and often beats
  $\greedyallc$ on average dilation (on grids \emph{and}
  tori). Interestingly, the overall quality of $\greedyall$ is much
  worse than that of $\greedymin$ (both from previous work),
  while this trend is reversed if we
  look at the modified versions $\greedyallc$ and $\greedyminc$.
\end{enumerate}

We now look at the influence of the partitioning quality on the
quality of the mapping algorithms (see
Table~\ref{tab:app:meshesComp:2DGrid1} (Table~VI in the Appendix
provides more evidence). As for $\kahip$ vs. $\gpmetis$, the small
lead of $\kahip$ over $\gpmetis$ \wrt MCV translates into an even
smaller lead of the corresponding mappings. Moreover, this small lead
is only on average, and there are cases where $\gpmetis$ partitions
lead to better mapping results.  As for $\kahip$ vs. $\ndmetis$, poor
edge cut and$\slash$or MCV seem to have a deteriorating effect on
mapping quality.

\begin{table}[htb]
\caption{Mapping of meshes onto 2DGrid($16 \times 16$).}
\begin{center}
%\begin{footnotesize}
\scalebox{0.85}{
  \begin{tabular}{l || c c c | c  c  c | c  c  c  | c  c  c }
    Algo & $\mathcal{Q}t^{gm}_{min}$ & $\mathcal{Q}t^{gm}_{mean}$ & $\mathcal{Q}t^{gm}_{max}$ & $\mathcal{Q}mC^{gm}_{min}$ &
    $\mathcal{Q}mC^{gm}_{mean}$ & $\mathcal{Q}mC^{gm}_{max}$ & $\mathcal{Q}mD^{gm}_{min}$ &
    $\mathcal{Q}mD^{gm}_{mean}$ & $\mathcal{Q}mD^{gm}_{max}$ & $\mathcal{Q}aD^{gm}_{min}$ &
    $\mathcal{Q}aD^{gm}_{mean}$ & $\mathcal{Q}aD^{gm}_{max}$\\ \hline
    $\initialM$     & 1.0205 & 1.0152 & 1.1762 & 0.9930 & 0.9963 & 0.9881 & 1.0010 & 1.0002 & 0.9985 & 1.0042 & 1.0071 & 0.9690\\
    $\initialMO$    & 1.0346 & 1.0506 & 1.1932 & 1.9068 & 1.9594 & 2.0013 & 2.7783 & 2.8391 & 2.8141 & 3.9285 & 4.2721 & 4.2889\\
    $\randomM$      & 0.9944 & 0.9749 & 0.9809 & 0.9961 & 0.9983 & 1.0027 & 0.9968 & 0.9847 & 0.9431 & 1.0251 & 1.0307 & 1.0524\\
    $\randomMO$     & 1.0120 & 0.9887 & 0.9510 & 1.9659 & 2.0315 & 2.0960 & 2.5159 & 2.5489 & 2.5157 & 3.8256 & 3.9547 & 4.3704\\
    $\rcmM$         & 1.0287 & 1.0087 & 1.0103 & 0.9980 & 1.0020 & 0.9955 & 1.0033 & 1.0169 & 1.0310 & 1.0473 & 1.0225 & 1.0019\\
    $\rcmMO$        & 1.1701 & 1.1362 & 1.1554 & 2.0780 & 2.1376 & 2.1856 & 2.8376 & 2.8998 & 3.0228 & 3.6763 & 3.8054 & 4.2111\\
    $\durebiM$      & 1.0083 & 1.0034 & 0.9956 & 0.9991 & 1.0096 & 1.0258 & 0.9951 & 1.0112 & 1.0131 & 1.0179 & 1.0441 & 0.9811\\
    $\durebiMO$     & 0.9980 & 1.0119 & 1.0194 & 1.9035 & 2.0223 & 1.9517 & 2.6668 & 2.8051 & 2.7540 & 2.5027 & 2.9897 & 3.1595\\
    $\greedyminM$   & 1.0039 & 1.0021 & 1.0182 & 1.0110 & 0.9953 & 0.9948 & 1.0295 & 0.9959 & 0.9923 & 1.0031 & 1.0088 & 1.0034\\
    $\greedyminMO$  & 1.0095 & 1.0177 & 1.1114 & 1.5176 & 1.4785 & 1.4779 & 2.5373 & 2.5238 & 2.5750 & 1.5832 & 1.5819 & 1.6219\\
    $\greedyallcM$  & 1.0091 & 1.0074 & 1.0236 & 0.9999 & 1.0135 & 1.0144 & 1.0491 & 1.0200 & 1.0085 & 1.0153 & 0.9919 & 0.9771\\
    $\greedyallcMO$ & 1.1313 & 1.1499 & 1.1819 & 1.9840 & 1.9265 & 1.8896 & 2.9915 & 2.9117 & 2.9811 & 2.1990 & 2.6233 & 3.2098\\
    $\greedymincM$  & 1.0002 & 1.0073 & 1.0205 & 1.0121 & 0.9887 & 0.9809 & 1.0166 & 0.9987 & 0.9979 & 1.0018 & 1.0013 & 0.9449\\
    $\greedymincMO$ & 1.1869 & 1.1879 & 1.1993 & 1.6175 & 1.5741 & 1.5145 & 2.5819 & 2.7487 & 2.8079 & 1.6267 & 2.2717 & 3.5661\\ \hline
  \end{tabular}
}
%\end{footnotesize}
\end{center}
\label{tab:app:meshesComp:2DGrid1}
\end{table}

%
%%%%%%%%%%%%%%%%%%%%%%%%%%%%%%%%%%%%%%%%%%%%%%
\subsection{Mapping of Complex Networks onto Grids and Tori.}
\label{sub:exp-complex-grids-tori}

Table~\ref{tab:meansQuotComplex} shows a comparison of
$\kahip$ partitions with partitions from $\gpmetis$ and
$\ndmetis$. For a description of the table see the explanation of
Table~\ref{tab:meansQuot} in Section~\ref{sub:exp-grids-tori-meshes}.

\begin{table}[htb]
  \caption{Performance of $\gpmetis$ and $\ndmetis$ on complex
    networks compared to $\kahip$. Values smaller than one indicate
    that $\gpmetis\slash\ndmetis$ is faster or that the quality the
    $\gpmetis\slash\ndmetis$-partitions is higher.}
\begin{center}
%\begin{footnotesize}
  \begin{tabular}{ l | c c c | c c c | c c c}
           & $Time^{gm}_{min}$ & $Time^{gm}_{mean}$ &
    $Time^{gm}_{max}$ & $Cut^{gm}_{min}$ & $Cut^{gm}_{mean}$ &
    $Cut^{gm}_{max}$ & $MCV^{gm}_{min}$ & $MCV^{gm}_{mean}$ &
    $MCV^{gm}_{max}$ \\\hline \hline
    $\gpmetis$ & 0.0083 & 0.0081 & 0.0078 & 1.0634 & 1.0619 & 1.0560 & 1.2531 & 1.2066 & 1.1536\\
    $\ndmetis$ & 0.0262 & 0.0268 & 0.0257 & 2.0284 & 2.0202 & 2.0121 & 1.8416 & 1.9157 & 2.0040
\end{tabular}
%\end{footnotesize}
\end{center}
\label{tab:meansQuotComplex}
\end{table}

Compared to the picture we saw on meshes, $\kahip$ now also leads in
terms of the edge cut. Moreover, the lead of $\kahip$ in terms of MCV
compared to $\gpmetis$ and $\ndmetis$ is even more pronounced (about
$20\%$).

Regarding topology mapping based on $\kahip$ partitions, we only
comment on results that deviate from those that we have described for
meshes (especially running times show the same trends). The main
differences are in the maximum and average dilation. Sometimes $\rcm$
and even $\random$ yield even lower maximum dilation than
$\greedyallc$. Moreover, average dilation behaves quite erratically,
as is revealed by a comparison between the $aD$-values of
$\greedyminc$ in Table~\ref{tab:app:social:2DTorus1} and
Tables~VII through~XI in the Appendix.

\begin{table}[htb]
\caption{Mapping of complex networks onto 2DTorus($16 \times 16$). Times
  $t^{gm}_{min}$, $t^{gm}_{mean}$ and $t^{gm}_{max }$ are in {\bf milliseconds}.}
\begin{center}
%\begin{footnotesize}
\scalebox{0.9}{
  \begin{tabular}{l || c c c | c  c  c | c  c  c  | c  c  c }
    Algo & $t^{gm}_{min}$ & $t^{gm}_{mean}$ & $t^{gm}_{max}$ & $QmC^{gm}_{min}$ &
    $QmC^{gm}_{mean}$ & $QmC^{gm}_{max}$ & $QmD^{gm}_{min}$ &
    $QmD^{gm}_{mean}$ & $QmD^{gm}_{max}$ & $QaD^{gm}_{min}$ &
    $QaD^{gm}_{mean}$ & $QaD^{gm}_{max}$\\ \hline
    $\random$     & 0.028 & 0.032 & 0.042 & 1.511 & 1.509 & 1.513 & {\bf 0.777} & {\bf 0.780} & 0.810 & 3.291 & 3.241 & 2.840\\
    $\rcm$        & 0.104 & 0.122 & 0.161 & 1.366 & 1.416 & 1.455 & 0.822 & 0.868 & 0.931 & 2.672 & 2.919 & 2.699\\
    $\durebi$     & 124.8 & 138.5 & 154.0 & 0.982 & 1.003 & 1.021 & 0.853 & 0.876 & 0.926 & 1.084 & 1.250 & 1.476\\
    $\greedyall$  & 5.182 & 5.344 & 5.684 & 1.100 & 1.119 & 1.131 & 1.068 & 1.011 & 0.985 & 1.189 & 1.315 & 1.301\\
    $\greedymin$  & 0.248 & 0.258 & 0.292 & 1.057 & 1.054 & 1.056 & 1.109 & 1.056 & 1.015 & 0.858 & 0.676 & 0.496\\
    $\greedyallc$ & 5.647 & 5.871 & 6.268 & {\bf 0.841} & {\bf 0.839} & {\bf 0.839} & 0.863 & 0.820 & {\bf 0.801} & {\bf 0.531} & {\bf 0.442} & {\bf 0.351}\\
    $\greedyminc$ & 5.251 & 5.550 & 6.153 & 0.858 & 0.856 & 0.855 & 0.857 & 0.829 & 0.808 & 0.644 & 0.575 & 0.481\\ \hline
  \end{tabular}
}
%\end{footnotesize}
\end{center}
\label{tab:app:social:2DTorus1}
\end{table}

Regarding maximum congestion, $mC$, the picture is the same as we saw
for meshes: Our new algorithm $\greedyallc$ always yields the best results.

Regarding the influence of partitioning quality on the quality of the
mappings we see that the higher partitioning quality of $\kahip$
compared to $\gpmetis$ (in terms of the edge cut and MCV) does not
translate into considerably better mappings, see
Table~\ref{tab:app:socialComp:2DGrid1} (for additional evidence see
Table~XII in the Appendix). As in the case of meshes, the
partitions that we derived from $\ndmetis$ (in a deliberately
sub-optimal way) lead to poor mappings.

\begin{table}[htb]
\caption{Mapping of complex networks onto 2DGrid($16 \times 16$).}
\begin{center}
%\begin{footnotesize}
\scalebox{0.85}{
  \begin{tabular}{l || c c c | c  c  c | c  c  c  | c  c  c }
    Algo & $\mathcal{Q}t^{gm}_{min}$ & $\mathcal{Q}t^{gm}_{mean}$ & $\mathcal{Q}t^{gm}_{max}$ & $\mathcal{Q}mC^{gm}_{min}$ &
    $\mathcal{Q}mC^{gm}_{mean}$ & $\mathcal{Q}mC^{gm}_{max}$ & $\mathcal{Q}mD^{gm}_{min}$ &
    $\mathcal{Q}mD^{gm}_{mean}$ & $\mathcal{Q}mD^{gm}_{max}$ & $\mathcal{Q}aD^{gm}_{min}$ &
    $\mathcal{Q}aD^{gm}_{mean}$ & $\mathcal{Q}aD^{gm}_{max}$\\ \hline
    $\initialM$     & 0.9723 & 0.9838 & 0.9812 & 1.0243 & 1.0407 & 1.0623 & 0.9914 & 0.9955 & 1.0209 & 1.2791 & 1.4033 & 1.4395\\
    $\initialMO$    & 0.9684 & 0.9963 & 0.9832 & 10.004 & 10.146 & 10.195 & 2.9752 & 2.8494 & 2.8637 & 3.9055 & 3.7684 & 3.5497\\
    $\randomM$      & 1.0292 & 1.0234 & 1.0285 & 0.9832 & 0.9910 & 0.9983 & 1.0866 & 1.1270 & 1.1680 & 1.0194 & 1.0167 & 1.1733\\
    $\randomMO$     & 1.0127 & 1.0272 & 1.0027 & 7.5795 & 7.7739 & 7.9801 & 2.7521 & 2.8475 & 2.9019 & 1.8046 & 1.5884 & 1.6585\\
    $\rcmM$         & 1.0970 & 1.0851 & 0.9894 & 1.0087 & 0.9939 & 0.9991 & 1.1129 & 1.1367 & 1.1333 & 0.9571 & 0.9581 & 1.0146\\
    $\rcmMO$        & 0.9611 & 0.9963 & 0.9008 & 7.5698 & 7.8690 & 8.2057 & 3.2024 & 3.2317 & 2.9598 & 1.6997 & 1.6083 & 1.4822\\
    $\durebiM$      & 1.0486 & 1.0492 & 1.0620 & 1.0076 & 1.0147 & 1.0027 & 1.0624 & 1.0745 & 1.0913 & 0.9775 & 1.0832 & 1.0793\\
    $\durebiMO$     & 0.4242 & 0.4094 & 0.3885 & 7.6865 & 8.1313 & 8.5734 & 3.4637 & 3.7612 & 3.9759 & 2.0400 & 2.1615 & 1.7987\\
    $\greedyminM$   & 1.0375 & 1.0341 & 1.0102 & 1.0284 & 1.0267 & 1.0278 & 1.0060 & 1.0440 & 1.0791 & 1.0846 & 1.1068 & 1.0990\\
    $\greedyminMO$  & 0.6850 & 0.6860 & 0.7321 & 9.0648 & 9.1580 & 9.4715 & 4.4637 & 4.3656 & 4.4046 & 2.5795 & 2.5664 & 2.5504\\
    $\greedyallcM$  & 1.0637 & 1.0637 & 1.0608 & 1.0450 & 1.0392 & 1.0392 & 1.0527 & 1.0582 & 1.0594 & 1.3140 & 1.0434 & 0.9078\\
    $\greedyallcMO$ & 0.3690 & 0.3700 & 0.3559 & 7.9916 & 8.0772 & 8.1791 & 3.5868 & 3.5635 & 3.6554 & 4.0989 & 2.5944 & 1.6957\\
    $\greedymincM$  & 1.0847 & 1.0731 & 1.0698 & 1.0454 & 1.0411 & 1.0347 & 1.1240 & 1.1579 & 1.2136 & 1.0790 & 1.0542 & 0.9809\\
    $\greedymincMO$ & 0.3688 & 0.3721 & 0.3769 & 9.2443 & 9.3823 & 9.8506 & 5.2981 & 5.1511 & 5.1905 & 2.4001 & 1.7601 & 1.3172\\ \hline
  \end{tabular}
}
%\end{footnotesize}
\end{center}
\label{tab:app:socialComp:2DGrid1}
\end{table}

\FloatBarrier
%\clearpage

%%%%%%%%%%%%%%%%%%%%%%%%%%%%%%%%%%%%%%%%%%%%%%
%
%
%%%%%%%%%%%%%%%%%%%%%%%%%%%%%%%%%%%%%%%%%%%%%%
\section{Conclusions and Future Work}
\label{sec:conclusions}
We performed extensive static mapping experiments, our scenario being
a consecutive pipeline of graph partitioning and bijective topology mapping.
These experiments involved  two classes of application
graphs (8 meshes, 12 complex networks), three ways to partition the
application graphs (one by $\kahip$, two by \metis), six
processor graphs (3 grids, 3 tori) and 8 mapping algorithms.

Our results indicate that the strengths and weaknesses of the mapping algorithms are, to
a large extent, independent of the class of application graphs (mesh
or complex network) and the processor graphs. The main differences are
in the maximum and average dilation. Especially the latter behaves
erratically in the case of complex networks.

Second, the quality of the partitions, both in terms of edge cut and
MCV, has little influence on the quality of the mapping, except in
cases where MCV is very poor. Thus, even MCV is not a good indicator
of how well a partition can be mapped onto a processor graph --- at
least within the realm of our experiments.

Third, our variant of a greedy mapping algorithm by Brandfass \etal,
\ie $\greedyallc$, clearly dominates all state-of-the art algorithms we considered
in terms of maximum congestion. The running time of our algorithm is $\bigO(\vert V_c
\vert \vert E_c \vert)$, where $V_c$ and $E_c$ is the vertex and the
edge set of the communication graph, respectively (and therefore usually fairly small).

If the weak influence of partition quality on mapping quality is
affirmed for more classes of application graphs and more parallel
architectures, improvements of static mapping
%, if any, 
are likely to come only out of new combinations of
partitioning and mapping. In the future we will investigate how 
to minimize the communication volume specified in Equation~\ref{eq+}
by such a coupled approach.

% TODO: fuer finale Version wieder reinnehmen
%\begin{small}
%\medskip
%\textbf{Acknowledgements.}  This work is partially supported by German
%Research Foundation (DFG) grant TEAM (Towards Exascale Application
%Mapping - An algorithmic framework for load balancing on non-uniform,
%massively parallel machines).
%\end{small}

%%%%%%%%%%%%%%%%%%%%%%%%%%%%%%%%%%%%%%%%%%%%%%%%
\bibliographystyle{IEEEtran}
\bibliography{roland,refs-parco,paper2.bib}
%%%%%%%%%%%%%%%%%%%%%%%%%%%%%%%%%%%%%%%%%%%%%%%%
%\onecolumn
%\clearpage

%%%%%%%%%%%%%%%%%%%%%%%%%%%%%%%%%%%%%%%%%%%%%%%%%%%%%%%%%%%%%%%%%%%%%%%%%%%%%%%%%%%%
%%%%%%%%%%%%%%%%%%%%%%%%%%%%%%%%%%%%% APPENDIX %%%%%%%%%%%%%%%%%%%%%%%%%%%%%%%%%%%%%
%%%%%%%%%%%%%%%%%%%%%%%%%%%%%%%%%%%%%%%%%%%%%%%%%%%%%%%%%%%%%%%%%%%%%%%%%%%%%%%%%%%%
\section*{Appendix}
%%%%%%%%%%%%%%%%%%%%%%%%%%%%%%%%%%%%%%%%%%%%%%
%
%
%%%%%%%%%%%%%%%%%%%%%%%%%%%%%%%%%%%%%%%%%%%%%%
%\section{Mapping onto meshes}
%\label{appsec:meshes}
\begin{table*}[!h]
\caption{Mapping of meshes onto 2DGrid($16 \times 16$). Times
  $t^{gm}_{min}$, $t^{gm}_{mean}$ and $t^{gm}_{max }$ are in {\bf milliseconds}.}
\begin{center}
%\begin{footnotesize}
\scalebox{0.9}{
  \begin{tabular}{l || c c c | c  c  c | c  c  c  | c  c  c }
    Algo & $t^{gm}_{min}$ & $t^{gm}_{mean}$ & $t^{gm}_{max}$ & $QmC^{gm}_{min}$ &
    $QmC^{gm}_{mean}$ & $QmC^{gm}_{max}$ & $QmD^{gm}_{min}$ &
    $QmD^{gm}_{mean}$ & $QmD^{gm}_{max}$ & $QaD^{gm}_{min}$ &
    $QaD^{gm}_{mean}$ & $QaD^{gm}_{max}$\\ \hline
    $\random$     & 0.029 & 0.033 & 0.040 & 2.295 & 2.264 & 2.243 & 1.941 & 1.926 & 1.861 & 1.744 & 1.550 & 1.423\\
    $\rcm$        & 0.057 & 0.068 & 0.079 & 1.564 & 1.563 & 1.572 & 1.288 & 1.297 & 1.304 & 1.670 & 1.530 & 1.393\\
    $\durebi$     & 49.40 & 50.82 & 52.68 & 0.756 & 0.818 & 0.881 & 0.697 & 0.750 & 0.773 & 0.913 & 0.974 & 1.012\\
    $\greedyall$  & 0.951 & 0.973 & 0.992 & 1.679 & 1.680 & 1.678 & 1.359 & 1.361 & 1.292 & 1.782 & 1.720 & 1.579\\
    $\greedymin$  & 0.160 & 0.164 & 0.177 & 1.120 & 1.192 & 1.274 & 1.092 & 1.136 & 1.175 & 0.871 & 0.803 & {\bf 0.714}\\
    $\greedyallc$ & 0.906 & 0.942 & 0.995 & {\bf 0.665} & {\bf 0.722} & {\bf 0.789} & {\bf 0.626} & {\bf 0.665} & {\bf 0.736} & 0.882 & 1.027 & 1.253\\
    $\greedyminc$ & 0.828 & 0.875 & 0.919 & 0.817 & 0.890 & 0.957 & 0.750 & 0.785 & 0.817 & {\bf 0.787} & {\bf 0.792} & 0.901\\ \hline
  \end{tabular}
}
%\end{footnotesize}
\end{center}
\label{tab:app:meshes:2DGrid1}
\end{table*}

\begin{table*}[!h]
\caption{Mapping of meshes onto 2DGrid($32 \times 32$). Times
  $t^{gm}_{min}$, $t^{gm}_{mean}$ and $t^{gm}_{max }$ are in {\bf milliseconds}.}
\begin{center}
%\begin{footnotesize}
\scalebox{0.9}{
  \begin{tabular}{l || c c c | c  c  c | c  c  c  | c  c  c }
    Algo & $t^{gm}_{min}$ & $t^{gm}_{mean}$ & $t^{gm}_{max}$ & $QmC^{gm}_{min}$ &
    $QmC^{gm}_{mean}$ & $QmC^{gm}_{max}$ & $QmD^{gm}_{min}$ &
    $QmD^{gm}_{mean}$ & $QmD^{gm}_{max}$ & $QaD^{gm}_{min}$ &
    $QaD^{gm}_{mean}$ & $QaD^{gm}_{max}$\\ \hline
    $\random$     & 0.093 & 0.110 & 0.135 & 2.933 & 2.871 & 2.806 & 2.268 & 2.169 & 2.084 & 1.769 & 1.637 & 1.440\\
    $\rcm$        & 0.205 & 0.239 & 0.273 & 1.776 & 1.782 & 1.781 & 1.441 & 1.433 & 1.418 & 1.784 & 1.683 & 1.481\\
    $\durebi$     & 216.5 & 221.6 & 227.7 & 0.715 & 0.757 & 0.806 & 0.655 & 0.690 & 0.756 & 0.952 & 1.060 & 1.125\\
    $\greedyall$  & 17.55 & 18.53 & 18.97 & 2.116 & 2.084 & 2.068 & 1.609 & 1.565 & 1.522 & 1.956 & 1.838 & 1.632\\
    $\greedymin$  & 3.229 & 3.867 & 3.947 & 1.294 & 1.371 & 1.442 & 1.289 & 1.320 & 1.408 & 0.936 & {\bf 0.858} & {\bf 0.732}\\
    $\greedyallc$ & 17.20 & 18.05 & 18.89 & {\bf 0.569} & {\bf 0.626} & {\bf 0.681} & {\bf 0.556} & {\bf 0.615} & {\bf 0.689} & 0.876 & 1.137 & 1.381\\
    $\greedyminc$ & 15.98 & 16.71 & 17.70 & 0.887 & 0.948 & 1.040 & 0.770 & 0.837 & 0.916 & {\bf 0.869} & 0.892 & 0.966\\ \hline
  \end{tabular}
}
%\end{footnotesize}
\end{center}
\label{tab:meshes:2DGrid2}
\end{table*}

\begin{table*}[!h]
\caption{Mapping of meshes onto 3DGrid($8 \times 8 \times 8$). Times
  $t^{gm}_{min}$, $t^{gm}_{mean}$ and $t^{gm}_{max }$ are in {\bf milliseconds}.}
\begin{center}
%\begin{footnotesize}
\scalebox{0.9}{
  \begin{tabular}{l || c c c | c  c  c | c  c  c  | c  c  c }
    Algo & $t^{gm}_{min}$ & $t^{gm}_{mean}$ & $t^{gm}_{max}$ & $QmC^{gm}_{min}$ &
    $QmC^{gm}_{mean}$ & $QmC^{gm}_{max}$ & $QmD^{gm}_{min}$ &
    $QmD^{gm}_{mean}$ & $QmD^{gm}_{max}$ & $QaD^{gm}_{min}$ &
    $QaD^{gm}_{mean}$ & $QaD^{gm}_{max}$\\ \hline
    $\random$     & 0.047 & 0.056 & 0.070 & 2.169 & 2.131 & 2.094 & 1.749 & 1.739 & 1.690 & 1.664 & 1.508 & 1.329\\
    $\rcm$        & 0.113 & 0.129 & 0.148 & 1.631 & 1.622 & 1.616 & 1.289 & 1.314 & 1.305 & 1.476 & 1.372 & 1.216\\
    $\durebi$     & 112.6 & 115.6 & 119.8 & 0.814 & 0.856 & 0.920 & 0.719 & 0.769 & 0.819 & 0.935 & 0.995 & 0.946\\
    $\greedyall$  & 4.006 & 4.050 & 4.101 & 1.723 & 1.704 & 1.688 & 1.348 & 1.326 & 1.267 & 1.670 & 1.618 & 1.470\\
    $\greedymin$  & 0.687 & 0.708 & 0.722 & 1.214 & 1.232 & 1.255 & 1.051 & 1.061 & 1.082 & 0.976 & {\bf 0.887} & {\bf 0.780}\\
    $\greedyallc$ & 3.875 & 3.996 & 4.104 & {\bf 0.683} & {\bf 0.713} & {\bf 0.736} & {\bf 0.617} & {\bf 0.633} & {\bf 0.638} & {\bf 0.876} & 1.055 & 1.151\\
    $\greedyminc$ & 3.683 & 3.871 & 4.067 & 0.827 & 0.859 & 0.891 & 0.695 & 0.718 & 0.757 & 1.268 & 1.339 & 1.388\\ \hline
  \end{tabular}
}
%\end{footnotesize}
\end{center}
\label{tab:meshes:3DGrid}
\end{table*}

\begin{table*}[!h]
\caption{Mapping of meshes onto 2DTorus($32 \times 32$). Times
  $t^{gm}_{min}$, $t^{gm}_{mean}$ and $t^{gm}_{max }$ are in {\bf milliseconds}.}
\begin{center}
%\begin{footnotesize}
\scalebox{0.9}{
  \begin{tabular}{l || c c c | c  c  c | c  c  c  | c  c  c }
    Algo & $t^{gm}_{min}$ & $t^{gm}_{mean}$ & $t^{gm}_{max}$ & $QmC^{gm}_{min}$ &
    $QmC^{gm}_{mean}$ & $QmC^{gm}_{max}$ & $QmD^{gm}_{min}$ &
    $QmD^{gm}_{mean}$ & $QmD^{gm}_{max}$ & $QaD^{gm}_{min}$ &
    $QaD^{gm}_{mean}$ & $QaD^{gm}_{max}$\\ \hline
    $\random$     & 0.090 & 0.111 & 0.139 & 2.656 & 2.609 & 2.566 & 1.487 & 1.4728 & 1.444 & 1.661 & 1.491 & 1.369\\
    $\rcm$        & 0.213 & 0.249 & 0.295 & 1.942 & 1.942 & 1.932 & 1.480 & 1.5121 & 1.552 & 1.536 & 1.400 & 1.312\\
    $\durebi$     & 212.9 & 217.0 & 221.6 & 0.794 & 0.843 & 0.896 & 0.782 & 0.8642 & 0.938 & 1.059 & 1.068 & 1.111\\
    $\greedyall$  & 17.79 & 18.07 & 18.42 & 1.529 & 1.526 & 1.533 & 1.638 & 1.7147 & 1.765 & 1.407 & 1.319 & 1.215\\
    $\greedymin$  & 4.031 & 4.068 & 4.110 & 1.303 & 1.336 & 1.360 & 1.166 & 1.2162 & 1.259 & 0.892 & {\bf 0.787} & {\bf 0.683}\\
    $\greedyallc$ & 17.53 & 18.08 & 18.94 & {\bf 0.569} & {\bf 0.611} & {\bf 0.647} & {\bf 0.609} & {\bf 0.6843} & {\bf 0.752} & {\bf 0.726} & 0.815 & 0.899\\
    $\greedyminc$ & 16.93 & 17.68 & 18.30 & 0.778 & 0.820 & 0.859 & 0.752 & 0.8148 & 0.886 & 0.892 & 0.973 & 1.028\\ \hline
  \end{tabular}
}
%\end{footnotesize}
\end{center}
\label{tab:meshes:2DTorus2}
\end{table*}

\begin{table*}[!h]
\caption{Mapping of meshes onto 3DTorus($8 \times 8 \times 8$). Times
  $t^{gm}_{min}$, $t^{gm}_{mean}$ and $t^{gm}_{max }$ are in {\bf milliseconds}.}
\begin{center}
%\begin{footnotesize}
\scalebox{0.9}{
  \begin{tabular}{l || c c c | c  c  c | c  c  c  | c  c  c }
    Algo & $t^{gm}_{min}$ & $t^{gm}_{mean}$ & $t^{gm}_{max}$ & $QmC^{gm}_{min}$ &
    $QmC^{gm}_{mean}$ & $QmC^{gm}_{max}$ & $QmD^{gm}_{min}$ &
    $QmD^{gm}_{mean}$ & $QmD^{gm}_{max}$ & $QaD^{gm}_{min}$ &
    $QaD^{gm}_{mean}$ & $QaD^{gm}_{max}$\\ \hline
    $\random$     & 0.049 & 0.058 & 0.075 & 2.052 & 2.013 & 2.000 & 1.307 & 1.331 & 1.337 & 1.571 & 1.447 & 1.392\\
    $\rcm$        & 0.115 & 0.131 & 0.154 & 1.638 & 1.637 & 1.641 & 1.317 & 1.395 & 1.502 & 1.485 & 1.378 & 1.360\\
    $\durebi$     & 115.8 & 118.5 & 121.2 & 0.931 & 0.972 & 1.049 & 0.859 & 0.943 & 1.039 & 1.057 & 1.028 & 1.045\\
    $\greedyall$  & 3.848 & 3.910 & 4.028 & 1.303 & 1.230 & 1.312 & 1.161 & 1.214 & 1.221 & 1.262 & 1.196 & 1.173\\
    $\greedymin$  & 0.656 & 0.670 & 0.684 & 1.209 & 1.217 & 1.240 & 1.002 & 1.049 & 1.073 & 0.903 & 0.797 & {\bf 0.726}\\
    $\greedyallc$ & 3.882 & 4.050 & 4.178 & {\bf 0.751} & {\bf 0.757} & {\bf 0.767} & {\bf 0.683} & {\bf 0.719} & {\bf 0.711} & 0.755 & 0.806 & 0.913\\
    $\greedyminc$ & 3.897 & 4.125 & 4.463 & 0.840 & 0.842 & 0.858 & 0.727 & 0.761 & 0.790 & {\bf 0.739} & {\bf 0.743} & 0.783\\ \hline
  \end{tabular}
}
%\end{footnotesize}
\end{center}
\label{tab:app:meshes:3DTorus}
\end{table*}

\begin{table*}[!h]
\caption{Mapping of meshes onto 2DTorus($16 \times 16$).}
\begin{center}
%\begin{footnotesize}
\scalebox{0.85}{
  \begin{tabular}{l || c c c | c  c  c | c  c  c  | c  c  c }
    Algo & $\mathcal{Q}t^{gm}_{min}$ & $\mathcal{Q}t^{gm}_{mean}$ & $\mathcal{Q}t^{gm}_{max}$ & $\mathcal{Q}mC^{gm}_{min}$ &
    $\mathcal{Q}mC^{gm}_{mean}$ & $\mathcal{Q}mC^{gm}_{max}$ & $\mathcal{Q}mD^{gm}_{min}$ &
    $\mathcal{Q}mD^{gm}_{mean}$ & $\mathcal{Q}mD^{gm}_{max}$ & $\mathcal{Q}aD^{gm}_{min}$ &
    $\mathcal{Q}aD^{gm}_{mean}$ & $\mathcal{Q}aD^{gm}_{max}$\\ \hline
    $\initialM$     & 1.0361 & 0.9916 & 0.8434 & 0.9987 & 0.9964 & 0.9968 & 1.0042 & 1.0032 & 0.9917 & 0.9844 & 1.0209 & 1.0812\\
    $\initialMO$    & 0.9799 & 0.9805 & 0.8702 & 1.6731 & 1.6923 & 1.7122 & 2.5803 & 2.7067 & 2.8812 & 1.5955 & 1.5739 & 1.5854\\
    $\randomM$      & 1.0016 & 1.0165 & 0.9776 & 1.0019 & 1.0014 & 1.0027 & 1.0179 & 1.0257 & 1.0847 & 1.0188 & 1.0256 & 1.0516\\
    $\randomMO$     & 1.0269 & 1.0453 & 1.0364 & 1.9834 & 2.0339 & 2.0892 & 2.6819 & 2.7297 & 2.9002 & 3.8130 & 4.0997 & 4.3485\\
    $\rcmM$         & 1.0212 & 1.0093 & 1.1226 & 1.0140 & 1.0005 & 1.0014 & 1.0219 & 1.0256 & 0.9761 & 1.0133 & 1.0188 & 1.0131\\
    $\rcmMO$        & 1.1623 & 1.1217 & 1.1519 & 2.0952 & 2.1371 & 2.1744 & 2.7086 & 2.7540 & 2.7208 & 3.7498 & 4.0711 & 4.3358\\
    $\durebiM$      & 1.0058 & 0.9997 & 0.9919 & 1.0062 & 1.0070 & 1.0161 & 1.0041 & 1.0061 & 1.0642 & 1.0238 & 1.0044 & 1.0358\\
    $\durebiMO$     & 0.9996 & 1.0072 & 1.0058 & 1.9550 & 2.0459 & 1.9538 & 2.8690 & 2.9701 & 3.1304 & 2.5652 & 3.1615 & 3.8322\\
    $\greedyminM$   & 1.0065 & 1.0013 & 1.0038 & 1.0162 & 0.9976 & 1.0011 & 0.9978 & 0.9865 & 0.9606 & 1.0100 & 1.0115 & 1.0202\\
    $\greedyminMO$  & 1.0185 & 1.0110 & 1.0252 & 1.5790 & 1.5512 & 1.5467 & 2.4673 & 2.5399 & 2.5795 & 1.6415 & 1.6700 & 1.7920\\
    $\greedyallcM$  & 1.0063 & 1.0052 & 1.0014 & 0.9941 & 0.9996 & 0.9941 & 1.0085 & 0.9930 & 0.9828 & 1.0682 & 1.0244 & 0.9929\\
    $\greedyallcMO$ & 1.1169 & 1.1379 & 1.1481 & 2.0367 & 1.9924 & 1.9810 & 2.8488 & 3.0034 & 3.0069 & 2.0127 & 2.3230 & 2.9728\\
    $\greedymincM$  & 1.0059 & 1.0101 & 1.0793 & 0.9915 & 0.9928 & 0.9991 & 0.9775 & 0.9935 & 1.0491 & 1.0182 & 1.0197 & 1.0345\\
    $\greedymincMO$ & 1.1779 & 1.1475 & 1.1270 & 1.7459 & 1.7276 & 1.6917 & 2.4912 & 2.6535 & 2.7047 & 1.7459 & 2.4074 & 3.3671\\ \hline
  \end{tabular}
}
%\end{footnotesize}
\end{center}
\label{tab:app:meshesComp:2DTorus1}
\end{table*}

%%%%%%%%%%%%%%%%%%%%%%%%%%%%%%%%%%%%%%%%%%%%%%
%
%
%%%%%%%%%%%%%%%%%%%%%%%%%%%%%%%%%%%%%%%%%%%%%%
%\section{Mapping onto complex networks}
%\label{appsec:complex}

\begin{table*}[!h]
\caption{Mapping of complex networks onto 2DGrid($16 \times 16$). Times
  $t^{gm}_{min}$, $t^{gm}_{mean}$ and $t^{gm}_{max }$ are in {\bf milliseconds}.}
\begin{center}
%\begin{footnotesize}
\scalebox{0.9}{
  \begin{tabular}{l || c c c | c  c  c | c  c  c  | c  c  c }
    Algo & $t^{gm}_{min}$ & $t^{gm}_{mean}$ & $t^{gm}_{max}$ & $QmC^{gm}_{min}$ &
    $QmC^{gm}_{mean}$ & $QmC^{gm}_{max}$ & $QmD^{gm}_{min}$ &
    $QmD^{gm}_{mean}$ & $QmD^{gm}_{max}$ & $QaD^{gm}_{min}$ &
    $QaD^{gm}_{mean}$ & $QaD^{gm}_{max}$\\ \hline
    $\random$     & 0.028 & 0.032 & 0.042 & 1.593 & 1.597 & 1.597 & 0.987 & 0.925 & 0.943 & 3.737 & 3.831 & 3.361\\
    $\rcm$        & 0.101 & 0.120 & 0.155 & 1.362 & 1.419 & 1.457 & 0.968 & 0.956 & 1.032 & 2.783 & 3.267 & 3.411\\
    $\durebi$     & 124.8 & 138.4 & 154.8 & 0.937 & 0.957 & 0.989 & 0.762 & 0.750 & 0.778 & 1.012 & 1.154 & 1.562\\
    $\greedyall$  & 5.176 & 5.337 & 5.610 & 1.078 & 1.098 & 1.104 & 1.096 & 0.975 & 0.925 & 1.365 & 1.625 & 1.630\\
    $\greedymin$  & 0.245 & 0.256 & 0.230 & 1.043 & 1.045 & 1.038 & 1.003 & 0.929 & 0.897 & 0.774 & 0.627 & {\bf 0.453}\\
    $\greedyallc$ & 5.622 & 5.837 & 6.173 & {\bf 0.799} & {\bf 0.813} & {\bf 0.827} & 0.798 & 0.738 & 0.730 & 0.847 & 1.197 & 1.461\\
    $\greedyminc$ & 5.243 & 5.488 & 5.836 & 0.849 & 0.854 & 0.856 & {\bf 0.711} & {\bf 0.674} & {\bf 0.669} & {\bf 0.554} & {\bf 0.548} & 0.557\\ \hline
  \end{tabular}
}
%\end{footnotesize}
\end{center}
\label{tab:app:social:2DGrid1}
\end{table*}

\begin{table*}[!h]
\caption{Mapping of complex networks onto 2DGrid($32 \times 32$). Times
  $t^{gm}_{min}$, $t^{gm}_{mean}$ and $t^{gm}_{max }$ are in {\bf milliseconds}.}
\begin{center}
%\begin{footnotesize}
\scalebox{0.9}{
  \begin{tabular}{l || c c c | c  c  c | c  c  c  | c  c  c }
    Algo & $t^{gm}_{min}$ & $t^{gm}_{mean}$ & $t^{gm}_{max}$ & $QmC^{gm}_{min}$ &
    $QmC^{gm}_{mean}$ & $QmC^{gm}_{max}$ & $QmD^{gm}_{min}$ &
    $QmD^{gm}_{mean}$ & $QmD^{gm}_{max}$ & $QaD^{gm}_{min}$ &
    $QaD^{gm}_{mean}$ & $QaD^{gm}_{max}$\\ \hline
    $\random$     & 0.094 & 0.112 & 0.145 & 1.849 & 1.836 & 1.817 & 0.820 & 0.812 & 0.858 & 5.086 & 5.603 & 5.046\\
    $\rcm$        & 0.412 & 0.504 & 0.641 & 1.510 & 1.575 & 1.623 & 0.824 & 0.871 & 0.951 & 3.840 & 4.773 & 4.528\\
    $\durebi$     & 415.0 & 445.1 & 484.8 & 0.881 & 0.911 & 0.948 & {\bf 0.617} & 0.636 & 0.685 & 0.951 & 1.207 & 1.477\\
    $\greedyall$  & 72.81 & 75.82 & 79.82 & 1.152 & 1.150 & 1.143 & 0.914 & 0.875 & 0.872 & 1.992 & 2.448 & 2.514\\
    $\greedymin$  & 3.307 & 4.141 & 4.316 & 1.047 & 1.050 & 1.046 & 0.867 & 0.849 & 0.858 & {\bf 0.681} & {\bf 0.584} & {\bf 0.437}\\
    $\greedyallc$ & 96.07 & 99.59 & 103.2 & {\bf 0.720} & {\bf 0.728} & {\bf 0.730} & 0.650 & {\bf 0.631} & {\bf 0.642} & 0.784 & 0.804 & 0.830\\
    $\greedyminc$ & 85.73 & 88.87 & 92.26 & 0.802 & 0.843 & 0.873 & 0.642 & 0.668 & 0.715 & 2.900 & 3.304 & 2.860\\ \hline
  \end{tabular}
}
%\end{footnotesize}
\end{center}
\label{tab:social:2DGrid2}
\end{table*}

\begin{table*}[!h]
\caption{Mapping of complex networks onto 3DGrid($8 \times 8 \times 8$). Times
  $t^{gm}_{min}$, $t^{gm}_{mean}$ and $t^{gm}_{max }$ are in {\bf milliseconds}.}
\begin{center}
%\begin{footnotesize}
\scalebox{0.9}{
  \begin{tabular}{l || c c c | c  c  c | c  c  c  | c  c  c }
    Algo & $t^{gm}_{min}$ & $t^{gm}_{mean}$ & $t^{gm}_{max}$ & $QmC^{gm}_{min}$ &
    $QmC^{gm}_{mean}$ & $QmC^{gm}_{max}$ & $QmD^{gm}_{min}$ &
    $QmD^{gm}_{mean}$ & $QmD^{gm}_{max}$ & $QaD^{gm}_{min}$ &
    $QaD^{gm}_{mean}$ & $QaD^{gm}_{max}$\\ \hline
    $\random$     & 0.049 & 0.057 & 0.074 & 1.539 & 1.527 & 1.519 & 0.878 & 0.850 & 0.884 & 3.459 & 3.872 & 3.479\\
    $\rcm$        & 0.209 & 0.258 & 0.335 & 1.381 & 1.410 & 1.437 & 0.889 & 0.880 & 0.900 & 2.686 & 3.307 & 3.177\\
    $\durebi$     & 254.0 & 275.0 & 230.0 & 0.935 & 0.952 & 0.967 & {\bf 0.751} & {\bf 0.757} & {\bf 0.751} & {\bf 0.911} & 1.197 & 1.504\\
    $\greedyall$  & 20,78 & 21.49 & 22.47 & 1.108 & 1.104 & 1.112 & 1.134 & 1.083 & 1.035 & 1.412 & 1.538 & 1.548\\
    $\greedymin$  & 0.882 & 0.905 & 0.927 & 1.097 & 1.100 & 1.100 & 0.930 & 0.883 & 0.879 & 0.937 & {\bf 0.856} & {\bf 0.658}\\
    $\greedyallc$ & 25.42 & 26.63 & 28.10 & {\bf 0.791} & {\bf 0.793} & {\bf 0.802} & 0.849 & 0.804 & 0.794 & 0.988 & 1.008 & 0.978\\
    $\greedyminc$ & 23.38 & 24.43 & 25.65 & 0.857 & 0.886 & 0.907 & 0.811 & 0.791 & 0.804 & 3.522 & 4.509 & 4.087\\ \hline
  \end{tabular}
}
%\end{footnotesize}
\end{center}
\label{tab:app:social:3DGrid}
\end{table*}

\begin{table*}[!h]
\caption{Mapping of complex networks onto 2DTorus($32 \times 32$). Times
  $t^{gm}_{min}$, $t^{gm}_{mean}$ and $t^{gm}_{max }$ are in {\bf milliseconds}.}
\begin{center}
%\begin{footnotesize}
\scalebox{0.9}{
  \begin{tabular}{l || c c c | c  c  c | c  c  c  | c  c  c }
    Algo & $t^{gm}_{min}$ & $t^{gm}_{mean}$ & $t^{gm}_{max}$ & $QmC^{gm}_{min}$ &
    $QmC^{gm}_{mean}$ & $QmC^{gm}_{max}$ & $QmD^{gm}_{min}$ &
    $QmD^{gm}_{mean}$ & $QmD^{gm}_{max}$ & $QaD^{gm}_{min}$ &
    $QaD^{gm}_{mean}$ & $QaD^{gm}_{max}$\\ \hline
    $\random$     & 0.093 & 0.110 & 0.138 & 1.748 & 1.733 & 1.720 & {\bf 0.627} & {\bf 0.644} & {\bf 0.691} & 4.906 & 5.019 & 4.675\\
    $\rcm$        & 0.420 & 0.513 & 0.650 & 1.524 & 1.577 & 1.618 & 0.723 & 0.762 & 0.824 & 3.820 & 4.422 & 4.475\\
    $\durebi$     & 412.5 & 442.1 & 480.1 & 0.925 & 0.946 & 0.958 & 0.697 & 0.764 & 0.826 & 0.942 & 1.318 & 1.971\\
    $\greedyall$  & 74.24 & 75.93 & 78.76 & 1.171 & 1.181 & 1.200 & 0.989 & 0.973 & 0.993 & 1.750 & 2.176 & 2.493\\
    $\greedymin$  & 4.428 & 4.475 & 4.532 & 1.115 & 1.116 & 1.107 & 0.941 & 0.958 & 0.993 & 0.785 & 0.710 & 0.557\\
    $\greedyallc$ & 96.29 & 99.57 & 103.6 & {\bf 0.770} & {\bf 0.769} & {\bf 0.764} & 0.749 & 0.755 & 0.771 & {\bf 0.578} & {\bf 0.522} & {\bf 0.439}\\
    $\greedyminc$ & 86.33 & 89.39 & 93.14 & 0.837 & 0.838 & 0.839 & 0.735 & 0.760 & 0.797 & 2.732 & 3.256 & 3.034\\ \hline
  \end{tabular}
}
%\end{footnotesize}
\end{center}
\label{tab:social:2DTorus2}
\end{table*}

\begin{table*}[!h]
\caption{Mapping of complex networks onto 3DTorus($8 \times 8 \times 8$). Times
  $t^{gm}_{min}$, $t^{gm}_{mean}$ and $t^{gm}_{max }$ are in {\bf
    milliseconds}.}
\begin{center}
%\begin{footnotesize}
\scalebox{0.9}{
  \begin{tabular}{l || c c c | c  c  c | c  c  c  | c  c  c }
    Algo & $t^{gm}_{min}$ & $t^{gm}_{mean}$ & $t^{gm}_{max}$ & $QmC^{gm}_{min}$ &
    $QmC^{gm}_{mean}$ & $QmC^{gm}_{max}$ & $QmD^{gm}_{min}$ &
    $QmD^{gm}_{mean}$ & $QmD^{gm}_{max}$ & $QaD^{gm}_{min}$ &
    $QaD^{gm}_{mean}$ & $QaD^{gm}_{max}$\\ \hline
    $\random$     & 0.050 & 0.058 & 0.072 & 1.479 & 1.474 & 1.469 & {\bf 0.762} & {\bf 0.753} & {\bf 0.773} & 3.029 & 3.498 & 3.251\\
    $\rcm$        & 0.211 & 259.9 & 342.2 & 1.383 & 1.404 & 1.419 & 0.813 & 0.831 & 0.910 & 2.558 & 3.255 & 3.130\\
    $\durebi$     & 257.3 & 279.1 & 302.5 & 0.995 & 1.014 & 1.028 & 0.893 & 0.925 & 0.994 & 1.003 & 1.418 & 1.804\\
    $\greedyall$  & 20.72 & 21.36 & 22.59 & 1.114 & 1.118 & 1.127 & 1.187 & 1.139 & 1.089 & 1.184 & 1.380 & 1.477\\
    $\greedymin$  & 0.841 & 0.866 & 0.886 & 1.100 & 1.099 & 1.095 & 0.956 & 0.945 & 0.945 & 0.893 & 0.816 & 0.634\\
    $\greedyallc$ & 25.48 & 26.71 & 28.15 & {\bf 0.852} & {\bf 0.847} & {\bf 0.839} & 0.974 & 0.958 & 0.943 & {\bf 0.660} & {\bf 0.620} & {\bf 0.519}\\
    $\greedyminc$ & 23.57 & 24.56 & 25.74 & 0.875 & 0.873 & 0.870 & 0.971 & 0.968 & 0.988 & 1.215 & 1.618 & 1.582\\ \hline
  \end{tabular}
}
%\end{footnotesize}
\end{center}
\label{tab:app:social:3DTorus}
\end{table*}

\begin{table*}[!h]
\caption{Mapping of complex networks onto 2DTorus($16 \times 16$).}
\begin{center}
%\begin{footnotesize}
\scalebox{0.85}{
  \begin{tabular}{l || c c c | c  c  c | c  c  c  | c  c  c }
    Algo & $\mathcal{Q}t^{gm}_{min}$ & $\mathcal{Q}t^{gm}_{mean}$ & $\mathcal{Q}t^{gm}_{max}$ & $\mathcal{Q}mC^{gm}_{min}$ &
    $\mathcal{Q}mC^{gm}_{mean}$ & $\mathcal{Q}mC^{gm}_{max}$ & $\mathcal{Q}mD^{gm}_{min}$ &
    $\mathcal{Q}mD^{gm}_{mean}$ & $\mathcal{Q}mD^{gm}_{max}$ & $\mathcal{Q}aD^{gm}_{min}$ &
    $\mathcal{Q}aD^{gm}_{mean}$ & $\mathcal{Q}aD^{gm}_{max}$\\ \hline
    $\initialM$     & 0.9714 & 1.0045 & 1.0365 & 1.0369 & 1.0357 & 1.0463 & 0.9555 & 0.9859 & 1.0147 & 1.2189 & 1.2921 & 1.3684\\
    $\initialMO$    & 1.0000 & 0.9888 & 1.0098 & 8.7294 & 8.9164 & 9.0664 & 3.2390 & 3.0742 & 2.9958 & 3.2217 & 2.9771 & 2.5263\\
    $\randomM$      & 1.0197 & 1.0232 & 1.0457 & 0.9888 & 0.9887 & 0.9873 & 1.0984 & 1.1191 & 1.1202 & 0.9969 & 1.0419 & 1.1427\\
    $\randomMO$     & 1.0153 & 1.0125 & 1.1086 & 7.6895 & 7.7816 & 7.8792 & 2.8851 & 2.9030 & 2.8530 & 1.8989 & 1.7575 & 1.5831\\
    $\rcmM$         & 1.1057 & 1.0787 & 0.9371 & 0.9843 & 0.9860 & 0.9910 & 1.1042 & 1.1155 & 1.1473 & 0.9775 & 1.0073 & 1.1084\\
    $\rcmMO$        & 0.9755 & 0.9936 & 0.8855 & 7.2693 & 7.3256 & 7.5483 & 3.1131 & 3.3055 & 3.2221 & 1.5371 & 1.5224 & 1.6418\\
    $\durebiM$      & 1.0430 & 1.0454 & 1.0633 & 1.0021 & 1.0092 & 1.0118 & 1.0575 & 1.0732 & 1.1035 & 0.9560 & 1.0777 & 1.1059\\
    $\durebiMO$     & 0.4275 & 0.4097 & 0.3889 & 8.0650 & 8.1325 & 8.4128 & 3.2878 & 3.5476 & 3.5615 & 2.1035 & 2.0679 & 1.6679\\
    $\greedyminM$   & 1.0334 & 1.0333 & 1.0187 & 1.0192 & 1.0181 & 1.0180 & 1.0293 & 1.0652 & 1.1093 & 1.1002 & 1.1122 & 1.1105\\
    $\greedyminMO$  & 0.6853 & 0.6817 & 0.6872 & 8.3516 & 8.4982 & 8.7077 & 3.4357 & 3.4170 & 3.3840 & 2.5083 & 2.4816 & 2.3758\\
    $\greedyallcM$  & 1.0608 & 1.0673 & 1.0932 & 1.0340 & 1.0304 & 1.0287 & 0.9963 & 1.0285 & 1.0338 & 1.1039 & 1.1171 & 1.1198\\
    $\greedyallcMO$ & 0.3670 & 0.3733 & 0.3912 & 8.1861 & 8.3093 & 8.4513 & 4.1153 & 4.1375 & 4.1129 & 2.4514 & 2.3375 & 2.3238\\
    $\greedymincM$  & 1.0827 & 1.0731 & 0.9870 & 1.0304 & 1.0283 & 1.0335 & 1.0148 & 1.0280 & 1.0331 & 1.0229 & 1.0711 & 1.1837\\
    $\greedymincMO$ & 0.3723 & 0.3708 & 0.3484 & 8.1740 & 8.3004 & 8.4825 & 3.7402 & 3.7732 & 3.7904 & 2.0304 & 1.7733 & 1.6501\\ \hline
  \end{tabular}
}
%\end{footnotesize}
\end{center}
\label{tab:app:socialComp:2DTorus1}
\end{table*}

%\pagebreak[4]
\end{document}